\newtheorem{theorem}{Theorem}
\newtheorem{proposition}[theorem]{Proposition}
\newtheorem{corollory}{Corollory}
\newtheorem{example}{Example}%
\newtheorem{remark}{Remark}%
\newtheorem{definition}{Definition}%
\def\@email#1#2{%
	\endgroup
	\patchcmd{\titleblock@produce}
	{\frontmatter@RRAPformat}
	{\frontmatter@RRAPformat{\produce@RRAP{*#1\href{mailto:#2}{#2}}}\frontmatter@RRAPformat}
	{}{}
}%
\begin{document}
	
	\preprint{AIP}
	
	\title[Concrete Quantum Channels ]{Concrete Quantum Channels and Algebraic Structure  of  Abstract Quantum Channels}
	\author{M.N.N. Namboodiri}
	\altaffiliation{Former Professor,Department of Mathematics, Cochin University of Science and Technology, Kochi, India.\\ IMRT,Thiruvananthapuram.}
	\email{mnnadri@gmail.com}
	\date{\today}
	\begin{abstract}This article analyzes the algebraic structure of the set of all quantum channels and its subset consisting of quantum channels that have Holevo representation. The regularity of these semigroups under composition of mappings is analyzed. It is also known that these sets are compact convex sets and, therefore, rich in geometry. An attempt is made to identify generalized invertible channels and also the idempotent channels. When channels are of the Holevo type, these two problems are fully studied in this article. The motivation behind this study is its applicability to the reversibility of channel transformations and recent developments in resource-destroying channels, which are idempotents. This is related to the coding-encoding problem in quantum information theory. Several examples are provided, with the main examples coming from pre-conditioner maps which assign preconditioners to matrices in numerical linear algebra. Thus, the known pre-conditioner maps are viewed as quantum channels in finite dimensions. In addition, the infinite-dimensional analogue of preconditioners is introduced and certain limit theorems are discussed; this is with an aim to analyze asymptotic methods in quantum channels analogous to problems in asymptotic linear algebra.
	\end{abstract}
	
	\maketitle
	
		\section{Introduction}\label{intro}
	Over the past few years, there has been significant progress in the construction and testing of preconditioners for Toeplitz and block Toeplitz matrices using Korovkin's classical approximation theorems through positive linear maps. These preconditioners have been found to exhibit complete positivity, leading to an abstract formulation of Korovkin-type theorems in a non-commutative context. Interestingly, these preconditioner maps can be viewed as abstract quantum channels.
	
	In this concise article, we explore this viewpoint by examining various related quantities such as the Kraus representation, channel capacity, and fidelity. We also discuss the algebraic structure of the class of quantum channels, which forms a semigroup under composition of mappings. Furthermore, we prove that the set of quantum channels in the Holevo form constitutes a sub-semigroup without an identity. We obtain a characterization of idempotents in terms of the associated stochastic matrices, which is significant as the range of idempotents remains invariant under quantum operations. Remarkably, we observe that the preconditioner maps are idempotents belonging to one of the aforementioned semigroups. In recent years, idempotent quantum channels have proved to be useful in coding-encoding problems.
	
	Completely positive linear maps between $C^{*}$-algebras have emerged as powerful tools in non-commutative analysis, quantum mechanics, quantum information theory, and linear algebra problems (see references such as \cite{ASH}, \cite{EBD}, and \cite{KMS}). As mentioned in the abstract, we focus on studying the maps $P_{U}(.)$ and $\tilde{P}_{U}$, described in the articles \cite{SSC} and \cite{KMS}, which assign good preconditioners to elements in the class $M_{n}(C)$ of $n \times n$ matrices. We observe that Lemma 2.1 on page 311 of \cite{SSC} implies that $P_{U}$ and its general versions introduced in \cite{KMS} can be viewed as abstract quantum channels. In line with this perspective, we examine these maps in the context of quantum channels (C-Q Channels) introduced and studied in references such as \cite{ASH}, \cite{JKR}, and \cite{MDC}. Inspired by the work of Alexander Holevo and David Kribs, we determine the channel fidelity, compute the Choi matrix \cite{JKR}, and analyze other relevant quantities associated with the map $P_{U}(.)$ as well as its general version, $\tilde{P}_{U}$. Additionally, we provide several concrete examples of unitary matrices $U$ that define $P_{U}$.
	
	This article is divided into five sections. Sections \ref{intro} and \ref{prelim} are devoted to the introduction and preliminaries, and a few new results of the preconditioner maps $P_{U}$ and $\tilde{P}_{U}$ concerned. A few related concepts, such as channel capacity and channel-fidelity, are considered in the next couple of sections. It is proved that the channel capacity of $P_{U}$  is infinite.
	Section \ref{sec3algebra} is mainly devoted to the algebraic structure of quantum channels, especially in the Holevo-form. Section \ref{sec4code} is devoted to the role of precondioner maps $P_{U}$ and $\tilde{P}_{U}$ as resource -destroying channels in coding-encoding problems mentioned in the introduction. Finally, in the last section \ref{sec5infi}, a few research problems for future considerations are given.\\
	As mentioned in the abstract,  idempotent quantum-channels are known  to be  useful in connection with coding-encoding problems. A possiblity of such interpretations for precoditioner map is discussed in remarks \ref{rem8} below \cite{ZXS},\cite{KZMK}.\\
	\textbf{Throughout sections from I  to IV, all spaces considered are of finite- dimentions.}
	\section{Preliminaries}\label{prelim}
	Throughout this article, we consider completely positive linear maps on $M_{n}(C)$, the set of all matrices of order$n$ with complex entries. As mentioned in the introduction, we concentrate on the maps considered in \cite{SSC} and  \cite{KMS}  for generating preconditioners in numerical linear algebra.\\
	Preconditioners are very important tools in Numerical Linear Algebra to deal with ill-conditioned 
	problems that arise naturally while considering real-world problems. In this regard,  constructing appropriate preconditioners is an equally important problem. As mentioned in the abstract, one such construction is given in [6], page 4.
	\begin{definition}{(Preconditioners)}\label{def1}
		Let $M_{n}(C)$ be the set of all  matrices of order $n$ over $C$.For a pre-assigned unitary $U \in M_{n}(C)$,let $M(U)$ denote the set of all matrices simultaniuosly diagonalised by $U$.To be more explicit,
		$$
		M(U)=\{A=U\Delta U^{*}:\Delta diagonal \}.
		$$
		A corresponding pre-conditioner $P_{U}(A)$ for a given matrix $A$  is then given by 
		$$
		P_{U}(A) =U\sigma(U^{*}AU)U^{*},
		$$
		where for a given  matrix $B$, $\sigma(B)$ denotes the matrix obtained by retaining the diagonal part of $B$ and putting zero elsewhere.
	\end{definition}
	\begin{remark}\label{rem1}
		In fact ,it can be seen that $P_{U}$ is the orthogonal projection of the $n^{2}$- dimentional Hilbert-space $M_{n}(C)$ with respect to the inner product $\langle A,B\rangle =Tr(AB^{*})$, onto the subspace $M(U)$.
	\end{remark}
	Next we recall \cite{KMS} a more general version $\tilde{P}_{U}$, which is defined as follows:
	\begin{definition}{(\cite{KMS})}\label{def2}
		Let $\{P_{k};k=1,2,...,d\}$ be a finite collection of mutually orthogonal projections such that $\Sigma_{k=1}^{d} P_{k}=I_{n}$, where $d\leq n$.Let 
		\begin{equation}\label{eq1}
			\Psi(A) =\Sigma_{k=1}^{d} P_{k}AP_{k},
		\end{equation}
		and
		\begin{equation}\label{eq2}
			\tilde{M}_{U}= \{A \in M_{n}(C):U^{*}AU= \Psi(A)\}.
		\end{equation}	
	\end{definition}	
	Recall the definition of $\tilde{P}_{U}$:
	\begin{equation}\label{eq3}
		\tilde{P}_{U}(A)=U(\Psi(U^{*}AU))U^{*},   A\in M_{n}(C).
	\end{equation}
	\begin{theorem}{\cite{KMS}}\label{thm1}
		\begin{itemize}
			With $A,B \in M_{n}(C)$
			\item[{[1]}]
			$$
			\tilde{P}_{U}(\alpha A+\beta B) =\alpha \tilde{P}_{U}(A)+\beta \tilde{P}_{U}(B),
			$$
			\item[{[2]}] 
			$$
			\tilde{P}_{U}( A^{*}) =\tilde{P}_{U}(A)^*,
			$$
			\item[{[3]}]
			$$
			tr(\tilde{P}_{U}(A)) =Tr(A),
			$$
			\item[{[5]}]
			$$
			\|A-\tilde{P}_{U}(A)\|^{2}_F =\|A\|_{F}^{2} -\|\tilde{P}_{U}(A)\|^{2}_{F},
			$$
			where $ \|.\|_{F} $ denotes the Frobinious norm of matrices.
		\end{itemize}
	\end{theorem}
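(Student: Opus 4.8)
The plan is to verify each of the four listed identities directly from the defining formula $\tilde{P}_{U}(A)=U\,\Psi(U^{*}AU)\,U^{*}$, where $\Psi(A)=\sum_{k=1}^{d}P_{k}AP_{k}$ with $\{P_{k}\}$ mutually orthogonal projections summing to $I_{n}$. The only structural facts I need are: (i) conjugation $X\mapsto U^{*}XU$ by a fixed unitary is a linear, $*$-preserving, trace-preserving bijection on $M_{n}(C)$ with inverse $X\mapsto UXU^{*}$; and (ii) $\Psi$ is itself linear, $*$-preserving, and trace-preserving. Fact (ii) for linearity and $*$-preservation is immediate since $(P_{k}AP_{k})^{*}=P_{k}A^{*}P_{k}$ (each $P_{k}$ is self-adjoint); for the trace, $\operatorname{tr}(P_{k}AP_{k})=\operatorname{tr}(P_{k}^{2}A)=\operatorname{tr}(P_{k}A)$, so $\operatorname{tr}(\Psi(A))=\sum_{k}\operatorname{tr}(P_{k}A)=\operatorname{tr}\big((\sum_{k}P_{k})A\big)=\operatorname{tr}(A)$, using $\sum_{k}P_{k}=I_{n}$.

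For item [1], linearity of $\tilde{P}_{U}$ follows by composing three linear maps: $A\mapsto U^{*}AU$, then $\Psi$, then $X\mapsto UXU^{*}$. For item [2], I would compute $\tilde{P}_{U}(A^{*})=U\,\Psi(U^{*}A^{*}U)\,U^{*}=U\,\Psi\big((U^{*}AU)^{*}\big)\,U^{*}=U\,\big(\Psi(U^{*}AU)\big)^{*}\,U^{*}=\big(U\,\Psi(U^{*}AU)\,U^{*}\big)^{*}=\tilde{P}_{U}(A)^{*}$, using the $*$-preservation of $\Psi$ and the identity $(UYU^{*})^{*}=UY^{*}U^{*}$. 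For item [3], the cyclicity of the trace gives $\operatorname{tr}(\tilde{P}_{U}(A))=\operatorname{tr}(U\,\Psi(U^{*}AU)\,U^{*})=\operatorname{tr}(\Psi(U^{*}AU))=\operatorname{tr}(U^{*}AU)=\operatorname{tr}(A)$.

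Item [5] is the only one requiring a genuine computation rather than bookkeeping, so I expect it to be the main (though still mild) obstacle. The key observation is that $\tilde{P}_{U}$ is an orthogonal projection on $M_{n}(C)$ with respect to the Frobenius (Hilbert--Schmidt) inner product $\langle X,Y\rangle=\operatorname{tr}(XY^{*})$, generalizing Remark \ref{rem1}: indeed $\Psi$ is idempotent since $P_{j}P_{k}=\delta_{jk}P_{k}$ forces $\Psi(\Psi(A))=\sum_{j,k}P_{j}P_{k}AP_{k}P_{j}=\sum_{k}P_{k}AP_{k}=\Psi(A)$, and $\Psi$ is self-adjoint for $\langle\cdot,\cdot\rangle$ because $\langle\Psi(A),B\rangle=\sum_{k}\operatorname{tr}(P_{k}AP_{k}B^{*})=\sum_{k}\operatorname{tr}(AP_{k}B^{*}P_{k})=\langle A,\Psi(B)\rangle$; conjugation by $U$ is a unitary on $(M_{n}(C),\langle\cdot,\cdot\rangle)$, so $\tilde{P}_{U}$ inherits idempotency and self-adjointness and is therefore an orthogonal projection. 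Consequently $A-\tilde{P}_{U}(A)$ is orthogonal to $\tilde{P}_{U}(A)$, and the Pythagorean identity gives $\|A\|_{F}^{2}=\|\tilde{P}_{U}(A)\|_{F}^{2}+\|A-\tilde{P}_{U}(A)\|_{F}^{2}$, which rearranges to the claimed formula. Alternatively, one can expand $\|A-\tilde{P}_{U}(A)\|_{F}^{2}=\|A\|_{F}^{2}-2\operatorname{Re}\langle A,\tilde{P}_{U}(A)\rangle+\|\tilde{P}_{U}(A)\|_{F}^{2}$ and check directly that $\langle A,\tilde{P}_{U}(A)\rangle=\|\tilde{P}_{U}(A)\|_{F}^{2}$ by passing to $B=U^{*}AU$ and using $\langle B,\Psi(B)\rangle=\langle\Psi(B),\Psi(B)\rangle$, which in turn follows from self-adjointness and idempotency of $\Psi$.
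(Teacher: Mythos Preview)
Your proof is correct. The paper does not actually prove Theorem~\ref{thm1} (it is quoted from \cite{KMS}), but its infinite-dimensional analogue in Section~\ref{sec5infi} is proved by exactly the method you use: verify idempotency and self-adjointness of the map with respect to the trace inner product to conclude it is an orthogonal projection, and then invoke the Pythagorean identity for item~[5].
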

	
	It is clear that $P_{U}$ is a special case of $\tilde{P}_{U}$ if the projections $\{P_{k}:k=1,2,...d\}$ are precisely rank one pairwise orthogonal projections $\{p_{k}:k=1,2,...,n\}$.
	However, we treat them differently as and when the situation arises. \\	
	\begin{example}	\label{ex1}
		Several examples of unitaries $U$ arise in concrete situations such as discrete Fourier transforms, interpolation of functions etc. \cite{SSC},\cite{KMS}. We quote them here for the sake of completeness.

		Let $v = \left\{ {v_n } \right\}_{n \in N} \,\, \textrm{with}\,v_n  = \left(
		{v_{nj}
		} \right)_{j \le n - 1}$ be a sequence of trigonometric
		functions on an interval $I.$ Let $S = \left\{ {S_n } \right\}_{n \in N}$
		be a sequence of grids of n points on $ I,$
		namely, $S_n  = \left\{ {x_i^n ,i = 0,1,.\,.\,.\,n - 1} \right\}$.
		Let us suppose that the generalized Vandermonde matrix
		\begin{equation} \nonumber
			{ V_n  = }\left( {{ v}_{{ nj}} { (}x_i^n { )}} \right)_{{ i;j = 0}}^{{ n - 1}}
		\end{equation}
		is a unitary matrix. Then, algebra of the form $M_{{U_n}}$ is a trigonometric
		algebra if $U_n = {V_n}^{*}$ with $V_n$ a generalized trigonometric Vandermonde matrix.
		
		We quote  examples of trigonometric algebras with the following choice of the
		sequence of matrices ${U_n}$ and grid $S_n$.
		
		\begin{itemize}
			
			\item[{[1]}]
			$$	
			U_n = F_n=\left( {\frac{{ 1}}{{\sqrt { n} }}e^{ijx_i^n } } \right)\,\,,\,\,\,\,i,j
			=0,1,.\,.\,.\,n - 1,
			$$
			\item[{[2]}]
			$$
			S_n= \left\{ {x_i^n  = \frac{{2i\pi }}{n},i = 0,1,.\,.\,.\,n - 1} \right\} \subset
			I= \left[ { - \pi ,\pi } \right]
			$$
			\item[{[3]}]
			$$
			U_n = G_n =\left( {\sqrt {\frac{2}{n+1}} }sin(j+1)x_i^n \right)\,\,,\,\,\,\,i,j =
			0,1,.\,.\,.\,n - 1,
			$$
			\item[{[4]}]
			$$
			S_n= \left\{ {x_i^n  = \frac{{(i+1)\pi }}{n+1},i = 0,1,.\,.\,.\,n - 1} \right\}
			\subset I = \left[ {0 ,\pi } \right]
			$$
			\item[{[5]}]
			$$
			U_n= H_n =\left( \frac{{1}}{{\sqrt {n} }}\left[ {{sin(jx_i^n) + cos(jx_i^n )}}
			\right] \right)\,\,,\,\,\,\,i,j = 0,1,.\,.\,.\,n - 1.
			$$\\
			\item[{[6]}]  
			
			$$
			U_{n}  \quad arising \quad from  \quad Discete \quad Quantum \quad  Fourier \quad  Transforms \quad (DQFT).
			$$
			
		\end{itemize}
	\end{example}
	The {\textbf complete- positivity} of $\tilde{P}_{U}$ was observed in \cite{KMS}, and this aspect was used to prove Korovkin-type theorems. As mentioned in the abstract, we highlight a different point of view regarding $\tilde{P}_{U}$, namely the \textbf{quantum -channel } property mentioned in the introduction. To be more precise, theorem \ref{thm1} above describes the conditions of an \textbf{abstract quantum channel}\cite{ASH}.\\Next, we recall the definition of complete positivity followed by a few very important representations of such maps.
	\begin{definition}{(Complete positivity)}\label{def3}
		Let $A$ and $B$ be $C^{*}$-algebras over complex numbers and $\Phi: A \rightarrow B$ be a linear map.Then $\Phi$ is said to be completely positive if 
		\begin{equation}\label{eq4}
			\Phi^{(n)} : A\otimes M_{n}(C) \rightarrow B\otimes M_{n}(C),
		\end{equation}
		defined as 
		\begin{equation}\label{eq5}
			\Phi^{(N)} (a_{(i,j)})= (\Phi(a_{(i,j)})),
		\end{equation}
		is positivity preserving for all positive integers $n$  and 
		$(a_{(i,j)}) \in A\otimes M_{n}(C)$. 
	\end{definition}
	As mentioned in the introduction, we consider only finite dimensional cases.
	\begin{definition}\label{def4}
		Let$\mathcal{H}_{A}$ and $\mathcal{H}_{B}$ be the Hilbert spaces associated with the input and output systems of a channel, respectively. A channel in the Schrodinger picture is a linear, completely positive, trace-preserving map $\Phi$ 
		$$
		\Phi:\mathcal{T}(\mathcal{H}_{A}) \rightarrow \mathcal{T}(\mathcal{H}_{B}),
		$$ 
		and dually, a channel in the Heisenberg picture is a linear, completely positive map $\Phi^{*}$:
		$$
		\Phi^{*}:\mathcal{B}(\mathcal{H}_{A}) \rightarrow \mathcal{B}(\mathcal{H}_{B})
		$$ and dually a channel in the Heisenberg picture is a linear, completely positive map that preserves the identity. 
		Here $\mathcal{T}(H)$(respectively $\mathcal{B}(\mathcal{H}))$ denotes the set of all trace class operators on a Hilbert space $H$ (respectively bounded linear operators).
	\end{definition}
	Next, a few important representations of completely positive maps, such as \textbf{Stienspring-dilation theorem, Kraus-representation, Holevo-form}, are given below.
	\begin{theorem} \textbf{Stienspring- Dilation Theorem - finite-dimensional case}\label{thm2}
		Assume that $\mathcal{H}_{A},\mathcal{H}_{B}$ are finite-dimentional.Then for every completely positive map $\Phi^{*}:B(\mathcal{H}_{A})\rightarrow B(\mathcal{H}_{B}) $ there exists a Hilbert space $\mathcal{H}_{E}$ and an operator $V:\mathcal{H}_{A} \rightarrow\mathcal{H}_{B}\otimes \mathcal{H}_{E}$ such that
		\begin{equation}\label{eq6}
			\Phi^{*}(X)	= V^{*}(X\otimes I_{E}) ,X \in B(\mathcal{H}_{B})	
		\end{equation}	
		Dually,
		
		\begin{equation}\label{eq7}
			\Phi^{*}(S) = Tr VSV_{*},     S \in\mathcal{T}(\mathcal{H}_{A}).
		\end{equation}
		The map $\Phi^{*}$ is unital, i.e., identity preserving (the map $\Phi$ preserves trace ) if and only if the operator $V$ is isometric.
	\end{theorem}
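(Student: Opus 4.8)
The plan is to obtain the dilation by the usual GNS-type construction, specialized to finite dimensions, and then to put it into the stated ampliation form by invoking the representation theory of a full matrix algebra. (Note that for the formula $\Phi^{*}(X)=V^{*}(X\otimes I_{E})V$ to typecheck one reads $X\in B(\mathcal{H}_{B})$ and $\Phi^{*}:B(\mathcal{H}_{B})\to B(\mathcal{H}_{A})$, with $V:\mathcal{H}_{A}\to\mathcal{H}_{B}\otimes\mathcal{H}_{E}$; I will work with this reading.) First I would take the algebraic tensor product $B(\mathcal{H}_{B})\otimes\mathcal{H}_{A}$ and equip it with the sesquilinear form determined on elementary tensors by $\langle X\otimes\xi,\,Y\otimes\eta\rangle=\langle\xi,\,\Phi^{*}(X^{*}Y)\eta\rangle$. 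The key point is that this form is positive semi-definite: for vectors $X_{1},\dots,X_{n}$ and $\xi_{1},\dots,\xi_{n}$ the quantity $\sum_{i,j}\langle\xi_{i},\Phi^{*}(X_{i}^{*}X_{j})\xi_{j}\rangle$ equals $\langle\Xi,\Phi^{*(n)}(M)\Xi\rangle$ with $M=(X_{i}^{*}X_{j})_{ij}\geq 0$ a Gram matrix in $M_{n}(B(\mathcal{H}_{B}))$ and $\Xi=(\xi_{1},\dots,\xi_{n})$, so complete positivity of $\Phi^{*}$ gives $\Phi^{*(n)}(M)\geq 0$ and hence nonnegativity. Quotienting by the null subspace $N=\{z:\langle z,z\rangle=0\}$ yields an inner-product space that is automatically complete in finite dimensions; call it $\mathcal{K}$.

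Next I would check that left multiplication $\pi(X)(Y\otimes\xi)=(XY)\otimes\xi$ passes to the quotient and defines a $*$-representation $\pi:B(\mathcal{H}_{B})\to B(\mathcal{K})$, which is unital because $\pi(I)$ fixes every elementary tensor and $\mathcal{K}$ is spanned by their classes. Setting $V_{0}:\mathcal{H}_{A}\to\mathcal{K}$, $V_{0}\xi=[\,I\otimes\xi\,]$, a one-line computation on elementary tensors gives $\langle\xi,V_{0}^{*}\pi(X)V_{0}\eta\rangle=\langle I\otimes\xi,X\otimes\eta\rangle=\langle\xi,\Phi^{*}(X)\eta\rangle$, i.e. $V_{0}^{*}\pi(X)V_{0}=\Phi^{*}(X)$. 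Then I would use the structure theory of the type-I factor $B(\mathcal{H}_{B})\cong M_{\dim\mathcal{H}_{B}}(\mathbb{C})$: every finite-dimensional unital $*$-representation is unitarily equivalent to a multiple $\mathrm{id}_{\mathcal{H}_{B}}\otimes I_{\mathcal{H}_{E}}$ on $\mathcal{H}_{B}\otimes\mathcal{H}_{E}$, where $\mathcal{H}_{E}=\mathbb{C}^{m}$ carries the multiplicity. Absorbing the implementing unitary $W:\mathcal{K}\to\mathcal{H}_{B}\otimes\mathcal{H}_{E}$ into $V:=WV_{0}$ gives $V^{*}(X\otimes I_{E})V=V_{0}^{*}\pi(X)V_{0}=\Phi^{*}(X)$, which is the asserted representation. (In finite dimensions one can also skip the abstract step: expand the Choi matrix of $\Phi^{*}$ as a sum of at most $(\dim\mathcal{H}_{A})(\dim\mathcal{H}_{B})$ rank-one positives to get Kraus operators $A_{k}:\mathcal{H}_{A}\to\mathcal{H}_{B}$ with $\Phi^{*}(X)=\sum_{k}A_{k}^{*}XA_{k}$, and then $V:=\sum_{k}A_{k}\otimes e_{k}$ works with $\mathcal{H}_{E}=\mathrm{span}\{e_{k}\}$; I would record this as the more hands-on route.)

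For the dual statement I would use trace duality: for $S\in\mathcal{T}(\mathcal{H}_{A})$ and all $X$, $\mathrm{Tr}\big(\Phi^{*}(X)\,S\big)=\mathrm{Tr}\big(V^{*}(X\otimes I_{E})VS\big)=\mathrm{Tr}\big((X\otimes I_{E})\,VSV^{*}\big)=\mathrm{Tr}_{\mathcal{H}_{B}}\big(X\cdot\mathrm{Tr}_{\mathcal{H}_{E}}(VSV^{*})\big)$, so the predual channel is $S\mapsto\mathrm{Tr}_{\mathcal{H}_{E}}(VSV^{*})$. Finally, $\Phi^{*}(I_{\mathcal{H}_{B}})=V^{*}(I_{\mathcal{H}_{B}}\otimes I_{E})V=V^{*}V$, so $\Phi^{*}$ is unital exactly when $V^{*}V=I_{\mathcal{H}_{A}}$, i.e. $V$ is an isometry, equivalently the predual map is trace-preserving. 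The step that genuinely requires care is the second one — the reduction of $\pi$ to normal form — which rests on nondegeneracy of $\pi$ (automatic, since $\pi(I)=I$) together with the classification of representations of a full matrix algebra; the rest is routine bookkeeping on elementary tensors.
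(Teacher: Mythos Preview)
The paper does not supply a proof of this theorem: it is recalled as a known result, stated between Definition~\ref{def4} and Theorem~\ref{thm3} without argument. There is therefore no paper proof to compare against.

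Your argument is the standard GNS/Stinespring construction, correctly executed. You rightly flag and resolve the typographical inconsistency in the statement (the domain/codomain of $\Phi^{*}$ must be swapped for the formula $V^{*}(X\otimes I_{E})V$ to typecheck with $V:\mathcal{H}_{A}\to\mathcal{H}_{B}\otimes\mathcal{H}_{E}$). The positivity of the sesquilinear form via the Gram matrix $(X_{i}^{*}X_{j})$ and $n$-positivity is the correct engine; the passage from the abstract $\pi$ to the ampliation form $X\mapsto X\otimes I_{E}$ via the classification of unital $*$-representations of a full matrix algebra is the right structural step, and your alternative Choi-matrix/Kraus route is equally valid in finite dimensions and arguably more in keeping with the paper's concrete, matrix-level style. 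The duality computation and the unital $\Leftrightarrow$ isometric equivalence are both clean and correct. Nothing is missing.
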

	Next we state the well known \textbf{(Kraus- Representation)}\cite{KRB}
	\begin{theorem}\label{thm3}
		A map $\Phi^{*}$ is completely positive if and only if it can be represented in the form
		\begin{equation}\label{eq8}
			\Phi^{*}(X) = \Sigma_{k=1}^{d} V^{*}_{k}XV_{k},
		\end{equation}
		where $V_{k:}:\mathcal{H}_{A}\rightarrow \mathcal{H}_{B}$,or dually 
		\begin{equation}\label{eq9}
			\Phi(S)= \Sigma_{k=1}^{d}V_{k} SV^{*}_{k}, S\in \mathcal{T}(\mathcal{H}_{A}).
		\end{equation}
		The map $\Phi^{*}$ is unital ($\Phi$ is trace-preserving ) if and only if 
		\begin{equation}\label{eq10}
			\Sigma_{k=1}^{d} V_{k}^{\star}V_{k} = I.
		\end{equation}
	\end{theorem}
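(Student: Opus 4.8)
The plan is to obtain the Kraus form directly from the Stinespring dilation theorem (Theorem \ref{thm2}), which is available in the finite-dimensional setting, and then to pass to the dual (Schr\"odinger-picture) formula by transposing the trace pairing. For the \emph{sufficiency} direction I would first record that for a single operator $W:\mathcal H_A\to\mathcal H_B$ the map $X\mapsto W^{*}XW$ is completely positive: for a positive $Y=(X_{ij})\in B(\mathcal H_B)\otimes M_n(\mathbb C)\cong B(\mathcal H_B\otimes\mathbb C^{n})$ one has $\big((W^{*}X_{ij}W)\big)_{ij}=(W\otimes I_n)^{*}\,Y\,(W\otimes I_n)\ge 0$, since conjugation by a fixed operator preserves positivity. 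As a finite sum of completely positive maps is completely positive, every $\Phi^{*}$ of the stated form is completely positive; evaluating at $X=I$ gives $\Phi^{*}(I)=\sum_{k}V_k^{*}V_k$, so the normalization $\sum_k V_k^{*}V_k=I$ is exactly unitality of $\Phi^{*}$.

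For the \emph{necessity} direction, given $\Phi^{*}$ completely positive I would apply Theorem \ref{thm2} to produce a finite-dimensional $\mathcal H_E$ and $V:\mathcal H_A\to\mathcal H_B\otimes\mathcal H_E$ with $\Phi^{*}(X)=V^{*}(X\otimes I_E)V$. Fixing an orthonormal basis $\{e_1,\dots,e_d\}$ of $\mathcal H_E$ with $d=\dim\mathcal H_E$ and setting $V_k:=(I_B\otimes\langle e_k|)\,V:\mathcal H_A\to\mathcal H_B$, I would substitute the resolution $X\otimes I_E=\sum_{k=1}^{d}(I_B\otimes|e_k\rangle)\,X\,(I_B\otimes\langle e_k|)$ into the Stinespring identity to obtain $\Phi^{*}(X)=\sum_{k=1}^{d}V_k^{*}XV_k$. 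Taking $X=I_B$ shows $\Phi^{*}$ is unital iff $\sum_k V_k^{*}V_k=I_A$, and since Theorem \ref{thm2} also says $V$ is an isometry iff $\Phi$ is trace-preserving, the two normalizations coincide. An alternative route, consistent with the Choi matrix computations used later in the paper, is to diagonalize the Choi matrix $C_{\Phi}\ge 0$ as $\sum_k|v_k\rangle\langle v_k|$ and reshape each $v_k$ into an operator $V_k$; I would mention this but carry out the Stinespring argument as the main one.

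Finally, the dual formula $\Phi(S)=\sum_{k}V_kSV_k^{*}$ for $S\in\mathcal T(\mathcal H_A)$ follows by transposing the pairing $\langle\Phi(S),X\rangle=\langle S,\Phi^{*}(X)\rangle$ with respect to the Hilbert--Schmidt inner product and reading off $\Phi$ from $\Phi^{*}$. I expect no deep obstacle here, since Stinespring does the real work; the only places needing genuine care are the tensor-leg bookkeeping in the sufficiency step (keeping $B(\mathcal H_B)\otimes M_n(\mathbb C)$ correctly identified with $B(\mathcal H_B\otimes\mathbb C^{n})$) and tracking the conjugate-linearity of the trace pairing when deducing the Schr\"odinger-picture form from the Heisenberg-picture one. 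I would also remark that the representation is non-unique — replacing $(V_k)$ by $(\sum_j u_{kj}V_j)$ for any isometric coefficient matrix $(u_{kj})$ yields another Kraus family — and that one may always arrange $d\le\dim\mathcal H_A\cdot\dim\mathcal H_B$.
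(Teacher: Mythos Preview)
Your argument is correct and follows the standard derivation of the Kraus form from the Stinespring dilation. Note, however, that the paper does not actually prove Theorem~\ref{thm3}: it is stated as the well-known Kraus representation with a citation to \cite{KRB}, and the only comment the paper adds is that the minimal Kraus decomposition arises from the minimal Stinespring representation --- precisely the route you take. So your proposal is consistent with, and more detailed than, what the paper provides.
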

	
	This representation is not unique, but there is a representation with the minimal number of components arising from minimal Stienspring representation! Next, we compute the Kraus-representation of $P_{U}$, which is necessary for the subsequent analysis of $P_{U}$ as a quantum channel.
	\begin{theorem}\label{thm4}
		The map $\tilde{P}_{U}:M_{n}(C)\rightarrow M_{n}(C)$ has Kraus-representation given by 
		\begin{equation}\label{eq11}
			\tilde{P}_{U}(A)=\Sigma_{k=1}^{d} V_{k}^{*} AV_{k},
		\end{equation}
		where $V_{k}=U P_{k}U^{*}$ , $\Psi(A)=\Sigma_{k=1}^{d}P_{k}AP_{k}$ ,the pinching function defining $P_{U}$.
	\end{theorem}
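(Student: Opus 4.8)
The plan is to derive \eqref{eq11} by a direct substitution of the pinching map $\Psi$ into the defining expression \eqref{eq3} for $\tilde{P}_U$, and then to check that the operators produced this way satisfy the normalization of Theorem \ref{thm3}, so that the resulting identity is genuinely a Kraus decomposition. First, starting from $\tilde{P}_U(A)=U\,\Psi(U^{*}AU)\,U^{*}$ and inserting $\Psi(B)=\sum_{k=1}^{d}P_kBP_k$ with $B=U^{*}AU$, one obtains $\tilde{P}_U(A)=U\big(\sum_{k=1}^{d}P_kU^{*}AUP_k\big)U^{*}$. Since the sum is finite, $U$ and $U^{*}$ can be pushed inside term by term, giving
\[
\tilde{P}_U(A)=\sum_{k=1}^{d}(UP_kU^{*})\,A\,(UP_kU^{*}).
\]
Writing $V_k:=UP_kU^{*}$ this is $\sum_{k}V_kAV_k$. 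Because each $P_k$ is an orthogonal projection, $P_k^{*}=P_k$, hence $V_k^{*}=(UP_kU^{*})^{*}=UP_k^{*}U^{*}=UP_kU^{*}=V_k$; in particular $\sum_k V_kAV_k=\sum_k V_k^{*}AV_k$, which is exactly \eqref{eq11}.

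It then remains to verify that $\{V_k\}_{k=1}^{d}$ is an admissible Kraus family, i.e. that the completeness relation \eqref{eq10} holds; this also re-confirms the trace preservation recorded in item [3] of Theorem \ref{thm1}. Using that $P_k$ is idempotent, $V_k^{*}V_k=UP_kU^{*}UP_kU^{*}=UP_k^{2}U^{*}=UP_kU^{*}$, and then, since $\sum_{k=1}^{d}P_k=I_n$,
\[
\sum_{k=1}^{d}V_k^{*}V_k=U\Big(\sum_{k=1}^{d}P_k\Big)U^{*}=UI_nU^{*}=I_n.
\]
By Theorem \ref{thm3}, this exhibits $\tilde{P}_U$ as a completely positive, trace-preserving map with Kraus operators $V_k=UP_kU^{*}$, as claimed.

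The argument is essentially a routine verification, and there is no serious obstacle: the substantive content is just the recognition that the conjugated projections $UP_kU^{*}$ play the role of Kraus operators. The hypotheses are used only in the final step — self-adjointness and idempotency of the $P_k$ together with $\sum_k P_k=I_n$ — which is what upgrades the formal algebraic identity to a bona fide Kraus representation (and shows the family can even be taken with mutually orthogonal ranges). The corresponding statement for $P_U$ is obtained by specializing the $P_k$ to the rank-one diagonal projections $p_k=e_ke_k^{*}$ attached to the pinching $\sigma$, so that $V_k=Up_kU^{*}=(Ue_k)(Ue_k)^{*}$; this is also consistent with the description of $P_U$ as an orthogonal projection in Remark \ref{rem1}.
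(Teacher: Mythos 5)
Your argument is correct and is essentially the paper's own proof: substitute the pinching $\Psi$ into the defining equation \eqref{eq3} and regroup the factors to identify $V_k=UP_kU^{*}$. The extra verification that $\sum_{k}V_k^{*}V_k=I_n$ is a harmless (and welcome) addition confirming the family is a genuine Kraus decomposition.
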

	\begin{proof}
		The proof follows immediately by substituting the pinching $\Psi$ in the defining equation of $\tilde{P}_{U}$ and then grouping the terms appropriately.
	\end{proof}
	\begin{definition}(\textbf{Holevo - Form})\label{def5}
		A quantum-channel  $\Phi$ on $M_{n}(C)$	for which  there exist  density operators $\{S_{1},S_{2},...,S_{m}]\}$ in $M_{n}(C)$  and a resolution of identity $\{X_{1},X_{2},...,X_{m}\}$ such that
		\begin{equation}\label{eq12}
			\Phi(S) =\Sigma_{k=1}^{m} S_{k}T_{r}(SX_{k}),
		\end{equation}
		for all density operators $S$ is called the \textbf{Holevo-Form}\cite{KRB} of $\Phi$.\\
		In addition, if there exists an orthonormal basis $\{e_{1},e_{2},...,e_{n}\}$ such that 
		$S_{k}= \vert {e_{k}\rangle \langle e_{k}} \vert $ for $k=1,2,...,n$, then such a channel is called a \textbf{classical -quantum channel}(C-Q Channel).
	\end{definition}
	
	
	
	\subsection{Entanglement}
	\begin{definition}\label{def6}
		Let $\Phi $ be a quantumn channel on $M_{n}(C)\rightarrow M_{m}(C)$.Then $\Phi$ is called \textbf {entanglement breaking } if 
		for all $k \geq 1$ and states $X \in M_{n}(C)\otimes M_{m}(C)$, the output state $(id_{k} \otimes \Phi)(X)$ is separable, where $id_{k}$ is the identity map on $M_{n}(C)$.
	\end{definition}
	\begin{remark}\label{rem2}
		It is well known that $\Phi$ is entanglement breaking if and only if $\Phi$ has a following Kraus representation namely (refer Theorem 4, \cite{HOR})
		\begin{equation}\label{eq13}
			\Phi(X)=\Sigma_{j=1}^{n}V_{j}^{*} X V_{j}
		\end{equation}
		where each $V_{j}$ is of rank 1.So among $\{\tilde{P}_{U},P_{U}\}$ only $P_{U}$ is entanglement breaking.
	\end{remark}
	\section{Algebraic Structure}\label{sec3algebra}
	In this section, the algebraic structure of the collection of all quantum channels and the associated stochastic matrices \cite{KRB} are considered. One of the motivation in considering the algebraic structure is identification of \textbf{idempotent Channels}.Interestingly, recent articles  by Kamil Korzekwa, Zbigniew Puchała,etal.,\cite{KZMK} and Zi-Wen Liu etal.,\cite{ZXS} are centred arround idempotent channels, in the encoding-decoding problem
	in quantum communication theory.
	Let $QC(M_{n})$  and $HQC(M_{n})$ denote the collection of all quantum channels  and the set of all quantum channels in the Holevo-form on $M_{n}(C)$ respectively.
	\begin{theorem}\label{thm5}
		The set $QC(M_{n})$ is a semigroup with identity(i.e., monoid) and $\{\tilde{P}_{U},P_{U}: U unitary \in {M_{n}}\}$ are idempotents in $QC(M_{n})$ under composition of operators on $M_{n}(C)$.Also,the set $HQC(M_{n})$ is a sub- semigroup of $QC(M_{n})$ without identity,  containing $P_{U}$.
	\end{theorem}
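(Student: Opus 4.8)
The plan is to verify the claimed algebraic facts in four stages: (i) that composition of two quantum channels is again a quantum channel, so $QC(M_n)$ is closed under the (associative) operation of composition; (ii) that the identity map $\mathrm{id}_{M_n}$ is a quantum channel and serves as a two-sided unit, making $QC(M_n)$ a monoid; (iii) that $P_U$ and $\tilde P_U$ are idempotent elements of this monoid; and (iv) that $HQC(M_n)$ is closed under composition but contains no element acting as an identity. For (i), I would take $\Phi_1,\Phi_2 \in QC(M_n)$ with Kraus families $\{V_k\}$ and $\{W_\ell\}$ satisfying $\sum_k V_k^*V_k = I$ and $\sum_\ell W_\ell^*W_\ell = I$ (Theorem \ref{thm3}); then $\Phi_2\circ\Phi_1$ has Kraus operators $\{W_\ell V_k\}_{\ell,k}$, and $\sum_{\ell,k} (W_\ell V_k)^*(W_\ell V_k) = \sum_k V_k^*\big(\sum_\ell W_\ell^*W_\ell\big)V_k = \sum_k V_k^*V_k = I$, so the composite is completely positive and trace-preserving. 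Associativity is inherited from composition of maps, and (ii) is immediate since $\mathrm{id}$ has the single Kraus operator $I$.

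For (iii), idempotency of $\tilde P_U$ follows from equation \eqref{eq3}: $\tilde P_U(\tilde P_U(A)) = U\,\Psi\big(U^*U\,\Psi(U^*AU)\,U^*U\big)U^* = U\,\Psi(\Psi(U^*AU))\,U^*$, so it suffices to note that the pinching $\Psi(B)=\sum_k P_kBP_k$ is idempotent because the $P_k$ are mutually orthogonal projections summing to $I$: $\Psi(\Psi(B)) = \sum_{k,j} P_k P_j B P_j P_k = \sum_k P_k B P_k = \Psi(B)$, using $P_kP_j = \delta_{kj}P_k$. The case of $P_U$ is the special case of rank-one projections (as noted after Theorem \ref{thm4}), or alternatively one invokes Remark \ref{rem1}: $P_U$ is an orthogonal projection onto $M(U)$ in the Hilbert–Schmidt inner product, hence idempotent. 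One should also confirm these maps genuinely lie in $QC(M_n)$, i.e. are trace-preserving completely positive — complete positivity is quoted from \cite{KMS} and recalled via the Kraus form in Theorem \ref{thm4}, and trace preservation is item [3] of Theorem \ref{thm1}.

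For (iv), closure of $HQC(M_n)$ under composition is the main computational point: given $\Phi(S) = \sum_{k=1}^m S_k\,\mathrm{Tr}(SX_k)$ and $\Psi(S) = \sum_{j=1}^p T_j\,\mathrm{Tr}(SY_j)$ in Holevo form, I would compute $(\Psi\circ\Phi)(S) = \sum_j T_j\,\mathrm{Tr}\big(\sum_k S_k\mathrm{Tr}(SX_k)\,Y_j\big) = \sum_{j}\Big(\sum_k \mathrm{Tr}(S_kY_j)\,T_j\Big)\mathrm{Tr}(SX_k) = \sum_{j,k} \mathrm{Tr}(S_kY_j)\,T_j\,\mathrm{Tr}(SX_k)$, and then check that $\{\,\mathrm{Tr}(S_kY_j)\,T_j\,\}_{j,k}$ is a family of (scalar multiples summing correctly of) density operators and $\{X_k\}$ already resolves the identity, so the composite is again Holevo. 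The absence of an identity is the part I expect to be the real obstacle and the place where care is needed: I would argue that any $\Phi$ in Holevo form has range contained in the span of finitely many fixed density operators $\{S_1,\dots,S_m\}$, and more to the point that such a $\Phi$ is "entanglement breaking" / has a severely constrained image, so it cannot equal $\mathrm{id}_{M_n}$ (whose range is all of $M_n(C)$) — indeed a Holevo-form channel composed with itself collapses rank, so no Holevo channel can act as a neutral element for composition on all of $QC(M_n)$, nor even within $HQC(M_n)$ itself since $HQC$ contains channels (like $P_U$) whose range is a nontrivial proper subalgebra that no single Holevo channel can fix universally. Making the "no identity" claim fully precise — deciding whether it means no identity for $QC(M_n)$ or no internal identity for the subsemigroup — and supplying the rank/range argument cleanly is where I would spend the most effort; the containment $P_U \in HQC(M_n)$ itself is easy, realized by $S_k = p_k = Up_kU^*$ wait — more precisely by the spectral data of $U$, exhibiting $P_U(S) = \sum_k (U p_k U^*)\,\mathrm{Tr}\big(S\,Up_kU^*\big)$ with $\{Up_kU^*\}$ both the density operators and the resolution of identity.
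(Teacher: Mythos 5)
Your proposal is correct and follows essentially the same route as the paper: closure of $QC(M_n)$ under composition, idempotency of $P_U$ and $\tilde P_U$ from the pinching/orthogonal-projection structure, and closure of $HQC(M_n)$ by the very same Holevo-form composition computation, producing new density operators $L_k=\sum_j \mathrm{Tr}(S_kY_j)\,T_j$ against the unchanged resolution of identity $\{X_k\}$. One remark: the paper's proof silently omits the ``without identity'' claim altogether, so your closing argument actually supplies a missing step --- just tighten it by noting that an internal identity $E$ would have to fix every totally depolarizing channel $S\mapsto\rho\,\mathrm{Tr}(S)$ (which lies in $HQC(M_n)$), forcing $E=\mathrm{id}$, while $\mathrm{id}$ is not entanglement breaking (equivalently, no finite set of density operators has all pure states in its convex hull); the ``range contained in a finite span'' phrasing alone does not suffice, since $n^2$ density matrices can span all of $M_n(\mathbb{C})$.
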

	\begin{proof}
		It is straightforward to see  that $QC(M_{n})$  is a semigroup with identity.\\ The maps  $\{\tilde{P}_{U}, P_{U }:U unitary \in{M_{n}}\}$ are orthogonal projections onto the subspace $\tilde{M}_{U}$ and therefore they are idempotents.\\
		Let $\Phi_{j}:j=1,2$ be in $HQC(M_{n})$  and 
		\begin{equation}\label{eq14}
			\Phi_{1}(\rho)=\Sigma_{k} Tr(F_{k}\rho)R_{k},
		\end{equation} 
		and 
		and 
		\begin{equation}\label{eq15}
			\Phi_{2}(\rho)=\Sigma_{j} Tr(H_{J}\rho)S_{J},
		\end{equation}
		be the respective Holevo-representation.Here $\rho,R_{k},S_{j}$ are density matrices and 
		$F_{k},H_{j}$ are POVM,$j,k$.Therefore we have 
		\begin{equation}\label{eq16}
			\Phi_{1}\circ \Phi_{2}(\rho)=   \Sigma_{k}Tr[F_{k}\Sigma_{j}Tr(H_{j}\rho)S_{j}] R_{k},\\
		\end{equation}
		$$
		= \Sigma_{k}\Sigma_{j}Tr(H_{j}\rho) Tr(F_{k}S_{j})S_{j}] R_{k},\\
		$$
		$$
		=\Sigma_{k}\Sigma_{k}TrTr[F_{k}S_{j}Tr(H_{j}\rho)]R_{k} ,   \\
		$$
		$$
		=\Sigma_{j}Tr(H_{j}\rho)\Sigma_{k}Tr(F_{k}S_{j})R_{k}, \\
		$$
		Let
		$L_{J} =\Sigma_{k}Tr(F_{k}S_{j})R_{k}$.We show that $L_{j}$ is a density matrix for each $j$.Clearly,each $L_{j}$ is positive definite.Now we ahve 
		\begin{equation}\label{eq17}
			Tr(L_{j})=Tr\Sigma_{k}Tr(F_{k}S_{j})R_{k}=\Sigma_{k}Tr(F_{k}S_{j})Tr(R_{k})
		\end{equation}
		Interchanging summation, we get 
		$$
		Tr(L_{j})=\Sigma_{k}Tr(F_{k}S_{j})=Tr(\Sigma_{k}F_{k})S_{j}
		$$
		But $(\Sigma_{k}F_{k}=I$ and $Tr(S_{j})=1$ ,since $\{F_{k}:k=1,2,..,n\}$ is positive operator valued measure (POVM) and $S_{j}\& R_{k}$ are density matrices..Therefore we have 
		$$
		Tr(S_{j})=1.
		$$    
		Thus we have $L_{j}$  is density for each $j$. Hence $\Phi_{1}\circ \Phi_{2}$ has the following  Holevo representation
		\begin{equation}\label{eq18}
			\Phi_{1}\circ \Phi_{2}(\rho)=\Sigma_{j}Tr(H_{j}\rho)L_{j.}
		\end{equation}
		This completes the proof.             
	\end{proof}	
	Each element in $HQC(M_{n})$ can be associted with a stochastic matrix and vice versa \cite{KRB}.The following  theorems examine the structure of this association.
	
	\begin{theorem}\label{thm6}
		Assume that  $\Phi$ has the Holevo form (6.1) and let the associated  stochastic matrix $A(\Phi)$ with $(i,j)^{th}$ entry $a_{i,j}$ be as follows:
		\begin{equation}\label{eq19}
			a_{i,j}=Tr(F_{j}R_{k})
		\end{equation}
		$i,j=1,2,..,r.$.Then the quantum channel (6.1) in the Holevo form is an idempotent if and only if the probability vectors $(Tr(\rho F_{1}), Tr(\rho F_{2}),..., Tr(\rho F_{r}))^{t}$ are all steady-state vectors of the corresponding stochastic matrix (6.7) \cite{JKR}.
	\end{theorem}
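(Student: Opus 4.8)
The plan is to compute $\Phi\circ\Phi$ straight from the Holevo form and match it against $\Phi$ coefficient by coefficient. Writing $\Phi(\rho)=\sum_{k}\operatorname{Tr}(F_{k}\rho)R_{k}$ and substituting, one obtains
\[
\Phi(\Phi(\rho))=\sum_{k}\operatorname{Tr}\!\Bigl(F_{k}\sum_{j}\operatorname{Tr}(F_{j}\rho)R_{j}\Bigr)R_{k}=\sum_{k}\Bigl(\sum_{j}\operatorname{Tr}(F_{j}\rho)\,\operatorname{Tr}(F_{k}R_{j})\Bigr)R_{k}.
\]
Reading the entries of the associated matrix as $a_{i,j}=\operatorname{Tr}(F_{j}R_{i})$ (correcting the evident index slip in the statement), the scalar $\operatorname{Tr}(F_{k}R_{j})$ is the $(j,k)$ entry of $A(\Phi)$, so the coefficient of $R_{k}$ in $\Phi(\Phi(\rho))$ is $\bigl(A(\Phi)^{t}p(\rho)\bigr)_{k}$, where $p(\rho)=(\operatorname{Tr}(F_{1}\rho),\dots,\operatorname{Tr}(F_{r}\rho))^{t}$ is the probability vector attached to the state $\rho$ (it is a probability vector since $\{F_{j}\}$ is a resolution of the identity and each $R_{i}$ has unit trace). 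The coefficient of $R_{k}$ in $\Phi(\rho)$ is simply $p(\rho)_{k}$. Hence $\Phi$ is idempotent if and only if
\[
\sum_{k}\bigl[(A(\Phi)^{t}p(\rho))_{k}-p(\rho)_{k}\bigr]R_{k}=0\qquad\text{for every density matrix }\rho .
\]

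Next I would record that $A(\Phi)$ is row-stochastic: each row sum equals $\sum_{j}\operatorname{Tr}(F_{j}R_{i})=\operatorname{Tr}\bigl((\sum_{j}F_{j})R_{i}\bigr)=\operatorname{Tr}(R_{i})=1$. Consequently a probability vector $v$ is a steady-state vector of $A(\Phi)$ precisely when $A(\Phi)^{t}v=v$ (equivalently $v^{t}A(\Phi)=v^{t}$). Given this, the ``if'' direction is immediate: if every $p(\rho)$ is a steady-state vector then $A(\Phi)^{t}p(\rho)=p(\rho)$, all coefficients in the displayed identity vanish, and $\Phi\circ\Phi=\Phi$. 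It is legitimate to test the operator identity only on density matrices, since they span $M_{n}(C)$.

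For the converse I would pass to a minimal Holevo representation, in which $\{R_{1},\dots,R_{r}\}$ is linearly independent in $M_{n}(C)$; any repeated or linearly dependent $R_{k}$ can be absorbed into the remaining terms together with the corresponding POVM elements, changing neither $\Phi$ nor the row-stochasticity of the associated matrix. Granting linear independence, the identity $\sum_{k}\bigl[(A(\Phi)^{t}p(\rho))_{k}-p(\rho)_{k}\bigr]R_{k}=0$ forces every bracket to vanish, that is $A(\Phi)^{t}p(\rho)=p(\rho)$ for all $\rho$, which is exactly the assertion that the vectors $p(\rho)$ are all steady-state vectors. I expect this reduction to linearly independent range operators to be the main obstacle, and an essential one: without it the statement genuinely fails — a constant channel $\rho\mapsto R$ written with two equal output states $R_{1}=R_{2}=R$ but non-scalar $F_{1},F_{2}$ is idempotent, yet its vectors $p(\rho)=(\operatorname{Tr}(F_{1}\rho),\operatorname{Tr}(F_{2}\rho))^{t}$ are not stationary. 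The clean remedy is to either build linear independence of $\{R_{k}\}$ into the hypothesis or to perform the minimalization explicitly inside the proof.
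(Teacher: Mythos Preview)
Your argument follows essentially the same route as the paper: expand $\Phi\circ\Phi$ from the Holevo representation, compare the coefficients of the $R_{k}$ with those of $\Phi(\rho)$, and invoke linear independence of $\{R_{k}\}$ to read off the steady-state matrix equation $A(\Phi)^{t}p(\rho)=p(\rho)$. Your write-up is in fact more complete than the paper's --- you cover both implications, verify row-stochasticity directly, flag the index slip in the definition of $a_{i,j}$, and give a counterexample showing the linear-independence hypothesis cannot be dropped --- but the underlying computation is identical.
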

	\begin{proof}
		It is known that the above matrix $A((\Phi))$ is a stochastic matrix \cite{JKR}.Now,$\Phi^{2}=\Phi$ implies that
		\begin{equation}\label{eq20}
			\Sigma_{i=1}^{n}	Tr(\Phi(\rho)F_{i})R_{i}=\Phi(\rho)
		\end{equation}
		Therefore we have,
		\begin{equation}\label{eq21}
			\Sigma_{i=1}^{n}	Tr(\Sigma_{j=1}^{r}Tr(\rho F_{j})R_{j}F_{i})R_{i}=\Sigma_{k=1}^{r}Tr(\rho F_{k})R_{k}.
		\end{equation}
		i.e., we have the following identity 
		\begin{equation}\label{eq22}
			\Sigma_{k=1}^{r}Tr(\rho F_{k})R_{k}.	=\Sigma_{i=1}^{r} [\Sigma_{j=1}^{r}Tr{(\rho F_{j})Tr{(R_{j}F_{i})}}]R_{i}
		\end{equation}
		If $\{R_{k}:k=1,2,...,r\}$ is a linearly independent set, the above equations to a system of linear equations:
		\begin{equation}\label{eq23}
			[\Sigma_{j=1}^{r}Tr{(\rho F_{j})Tr{(R_{j}F_{k})}}] = Tr(\rho F_{k})
		\end{equation}
		for $k=1,2,...,r$, 
		for all density matrices $\rho$.\\
		We can rewrite the above equation as a matrix equation as
		
		\begin{equation}\label{eq24}
			\begin{bmatrix}
				a_{1,1} \quad a_{1,2}\quad ,...a_{1,r}\\
				a_{2,1} \quad a_{2,2},\quad ...,        a_{2,r}\\
				.\quad.,...,                   . \\
				. \quad ,...,.              \\
				a_{r,1} \quad a_{r,2} \quad ,\quad ..... a_{r,r}.
			\end{bmatrix}
			\begin{bmatrix} Tr(\rho F_{1})\\
				Tr{(\rho F_{2})}\\
				.\\
				.\\
				.\\
				Tr(\rho F_{r}).
			\end{bmatrix}
			=\begin{bmatrix}Tr(\rho F_{1})\\
				Tr{(\rho F_{2})}\\
				.\\
				.\\
				.\\
				Tr(\rho F_{r}).
			\end{bmatrix}.
		\end{equation}
	\end{proof}
\begin{corollory}
		Assume that  $\Phi$ has the Holevo form (6.1) and let the associated  stochastic matrix $A(\Phi)$ with $(i,j)^{th}$ entry $a_{i,j}$ be as follows:
	\begin{equation}\label{eq19}
		a_{i,j}=Tr(F_{j}R_{k})
	\end{equation}
	$i,j=1,2,..,r.$.Then the quantum channel (6.1) in the Holevo form is an idempotent if and only if the corresponding stochastic matrix (6.7) \cite{JKR} is an idempotent.
	\end{corollory}
	\begin{proof}
		The proof follows immediately by taking the density matrix $\rho= R_{k}$ for $\{k=1,2,...,r\}$.
		\end{proof}
	\begin{remark}\label{rem3}
		It is important to note that the well known charecterisation of stochastic idempotent matrices can be used to check the condition in the above Corollory.
		It is also trivial to see that when $\Phi=P_{U}$, the matrix (19) above is the identity matrix, and the equation \ref{eq24} is trivially satisfied.\\
		Another related question is the generalized invertibility of quantum channels. However, Theorem 6.3 above will be useful to address the question of quantum channels in the Holevo form. Recall the definition of the generalized inverse of an element in a semigroup.
	\end{remark}
	\begin{remark}\label{rem4}
		By Theorem 3, section 2 \cite{HOR},$\Phi \in HQC(M_{n})$ if and only if $\Phi$ has a Kraus representation 
		\begin{equation}\label{eq25}
			\Phi(T)=\Sigma_{k=1}^{d} V_{k}^{*} TV_{k},
		\end{equation}
		where each $V_{k}$ is of rank one and  for all $T$. A simple, direct computation shows that the composition of any two maps in  has Holevo form.However, the details are provided below since explicit expression of this will be of use at a later context.\\
		Let $\Phi_{j}:j=1,2$ be in $HQC(M_{n})$ . Without loss of generality,we may assume that 
		\begin{equation}\label{eq26}
			\Phi_{j}(T)=\Sigma_{k}^{m_{j}} V_{k,j}^{*} T V_{k,j}
		\end{equation} 
		where $V_{k,j}$ has rank one for all $k$ and $j=1,2.$, for all $T\in M_{n}(C)$. Therefore we have
		\begin{equation}\label{eq27}
			\Phi_{1}(\Phi_{2}(T))=\Sigma_{k=1}^{m_{1}}\Sigma_{l=1}^{m_{2}}V_{k,1}^{*} V_{l,2}^{*} TV_{l,2}V_{k,1},
		\end{equation}
		and it is clear that if $V_{k,1}$ is not orthogonal to $V_{l,2}$ then the product $V_{k,1}V_{l,2}$ is of rank one.This completes the wanted cmputation.
	\end{remark}
	\begin{definition}\label{def7}
		Let $\mathcal{S}$ be a semigroup and $a\in \mathcal{S}$.An element $a^{\dagger}$ is called a generalised inverse of $a$ if 
		\begin{equation}\label{eq28}
			a a^{\dagger} a =a.
		\end{equation}
		It is called a semi-inverse if in addition 
		\begin{equation}\label{eq29}
			a^{\dagger} a a^{\dagger} =a^{\dagger}.
		\end{equation}
	\end{definition}
	Now the equations (28) and (29) above implies that $a a^{\dagger}$ and $a^{\dagger} a$ are idempotents.If $a^{\dagger} \in HQC(M_{n})$ ,then $aa^{\dagger }$ is an idempotent in $HQC(M_{n})$.\\
	We investigate the relationship between the compositon of elements in $HQC(M_{n})$ and the product of the corresponding stochastic matrices.  
	\begin{remark}\label{rem5}
		By definition, the associated stochastic matrix of $\Phi_{1}\circ \Phi_{2}$ is the matrix
		$a_{i,j}$ where 
		\begin{equation}\label{eq30}
			a_{i,j}=Tr(H_{i}L_{j})=Tr(H_{i}L_{j}) =Tr(H_{i}\Sigma_{k}Tr(F_{k}S_{j})R_{k}),
		\end{equation}
		$$
		=\Sigma_{k}Tr(F_{k}S_{j})Tr(R_{k}H_{i}).
		$$
	\end{remark}
	Next, we use equation \ref{eq30} above to relate the associated stochastic matrices of $\Phi_{1},\Phi_{2}\quad \& \quad \Phi_{1}\circ \Phi_{2}$.	The summary is as follows. For each quantum channel $\Phi$ in the Holevo form, let $[\Phi]$ denote the corresponding Stochastic matrix. Then we have the following theorem.
	\begin{theorem}\label{thm7}
		Let $\Phi\quad\&\quad\Psi$ be quantum channels in the Holevo form. Then
		\begin{equation}\label{eq31}
			[\Psi \circ \Phi] = [\Psi] [\Phi]
		\end{equation}
		where the modified matrix multiplication is as in equation \ref{eq30}.
	\end{theorem}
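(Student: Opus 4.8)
The plan is to read off both sides of \eqref{eq31} directly from the Holevo data of the two channels and to match them entry by entry; the statement is essentially a bookkeeping consequence of Theorem \ref{thm5} together with the definition of the associated stochastic matrix in Theorem \ref{thm6}.

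First I would fix Holevo representations
\[
\Psi(\rho)=\sum_{k}\operatorname{Tr}(F_{k}\rho)\,R_{k},\qquad
\Phi(\rho)=\sum_{j}\operatorname{Tr}(H_{j}\rho)\,S_{j},
\]
with $\{F_{k}\}$, $\{H_{j}\}$ resolutions of the identity into positive operators and $R_{k},S_{j}$ density matrices, and I would apply one fixed indexing convention of \eqref{eq19}, say $[\Psi]_{i,k}=\operatorname{Tr}(F_{i}R_{k})$ and $[\Phi]_{i,j}=\operatorname{Tr}(H_{i}S_{j})$, uniformly to $\Psi$, $\Phi$ and their composite. By Theorem \ref{thm5}, and specifically by \eqref{eq18}, the composite $\Psi\circ\Phi$ again lies in $HQC(M_{n})$ and carries the Holevo representation
\[
\Psi\circ\Phi(\rho)=\sum_{j}\operatorname{Tr}(H_{j}\rho)\,L_{j},\qquad
L_{j}=\sum_{k}\operatorname{Tr}(F_{k}S_{j})\,R_{k},
\]
with the same POVM $\{H_{j}\}$ as $\Phi$ and with each $L_{j}$ a density matrix (this last fact was checked in the proof of Theorem \ref{thm5}).

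Next I would compute the $(i,j)$ entry of $[\Psi\circ\Phi]$ by applying the convention above to this representation and expanding by linearity of the trace:
\[
[\Psi\circ\Phi]_{i,j}=\operatorname{Tr}(H_{i}L_{j})
=\sum_{k}\operatorname{Tr}(F_{k}S_{j})\,\operatorname{Tr}(R_{k}H_{i}).
\]
This is precisely the right-hand side of \eqref{eq30}, and that expression is, by definition, the $(i,j)$ entry of the modified product $[\Psi][\Phi]$. Hence \eqref{eq31} holds entrywise, which is the assertion.

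Two points deserve comment rather than calculation. A Holevo representation is not unique, so $[\,\cdot\,]$ should be understood relative to a fixed family of output states --- for instance a minimal one, or a linearly independent one as in the hypothesis of Theorem \ref{thm6} --- and with such a choice the entries computed above are unambiguous. Also, the ``modified multiplication'' in \eqref{eq31} is genuinely not ordinary matrix multiplication: the scalars $\operatorname{Tr}(F_{k}S_{j})$ and $\operatorname{Tr}(R_{k}H_{i})$ pair the POVM of one channel with the output states of the other, so the clean formulation is to \emph{define} $[\Psi][\Phi]$ to be the array whose $(i,j)$ entry is the contraction in \eqref{eq30} and to read the theorem as the assertion that this array equals the stochastic matrix of $\Psi\circ\Phi$. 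The only real obstacle is notational --- keeping the four families $\{F_{k}\},\{R_{k}\},\{H_{j}\},\{S_{j}\}$ and their index ranges straight and applying the chosen convention for \eqref{eq19} consistently to all three channels; once that is pinned down, the proof is the one-line trace expansion displayed above.
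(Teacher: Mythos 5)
Your proof is correct and follows essentially the same route as the paper: invoke the composition formula $L_{j}=\sum_{k}\operatorname{Tr}(F_{k}S_{j})R_{k}$ from the proof of Theorem \ref{thm5}, then expand $\operatorname{Tr}(H_{i}L_{j})$ by linearity of the trace to recover the contraction in equation \ref{eq30}, which is what the paper's own (very terse) proof does. Your closing remarks --- that the result depends on fixing a Holevo representation and that the ``modified multiplication'' must be \emph{defined} as the cross-pairing in \ref{eq30} rather than ordinary matrix multiplication --- are accurate and in fact more careful than the paper's treatment.
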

	\begin{proof}
		By equation \ref{eq30}, the associated stochastic matrix of $\Phi_{1}\circ \Phi_{2}$ is the matrix
		$a_{i,j}$ where 
		\begin{equation}\label{eq32}
			a_{i,j}=Tr(H_{i}L_{j})=Tr(H_{i}L_{j}) =Tr(H_{i}\Sigma_{k}Tr(F_{k}S_{j})R_{k}),
		\end{equation}
		$$
		=\Sigma_{k}Tr(F_{k}S_{j})Tr(R_{k}H_{i}).
		$$
		Next, we use equation \ref{eq30} above to relate the associated stochastic matrices of $\Phi_{1},\Phi_{2}\quad \& \quad \Phi_{1}\circ \Phi_{2}$.
	\end{proof}
	\begin{remark}\label{rem6}
		By using the well known  charecterisation of idempotent stochastic matrices by J.L.Doob \cite{JLD} ,the above theorem will provide a clear picture of idempotent quantum channels in the Holevo-form.One can also get information about generalised invertibility by using Theorem 7 above and the  charecterisation of generalised invertibile stocahstic matrices,Theorem 2,p.152 ,J.R.Wall \cite{JRW}.
	\end{remark}
	\begin{definition}{Entnaglement fidelity}\label{def8}
		
		Let $S=	 \vert \psi_{AR}\rangle \langle \Psi_{AR} \vert  $ be a pure state in the Hilbertspace $\mathcal{H_{A}}\otimes \mathcal{H_{R}}$.The \textbf{entanglement fidelity}
		$F_{e}(S ,\Phi)$ is defined as follows:	
		$$
		F_{e}(S ,\Phi)=\langle\psi_{AR}\rangle \vert (\Psi\otimes  Id_{R})( \vert \psi_{AR}\rangle \langle \psi_{AR} \vert ) \vert \Psi_{AR}\rangle,
		$$
		If the channel $\Psi(.)$ has the following Kraus representation
		$$
		\Psi(S)= \Sigma_{k=1}^{d}V_{K}SV_{K}^{*},
		$$
		then 
		$$
		F_{e}(S,\Psi)=\Sigma_{k=1}^{d}\vert T_{r}V_{K}S\vert ^{2}.
		$$
	\end{definition}	
	\begin{proposition}
		$P_{U}$ is a classical -quantum-classical (c-q-c) channel: equivalently, it is an entanglement-breaking- channel.
	\end{proposition}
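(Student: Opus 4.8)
The plan is to exhibit $P_{U}$ simultaneously as a channel with rank-one Kraus operators and in the Holevo measure-and-prepare form in which both the POVM elements and the prepared states are the rank-one projections onto one common orthonormal basis. The entanglement-breaking assertion then follows from the characterization recalled in Remark \ref{rem2}, while the classical-quantum-classical description is read off directly from the explicit Holevo form.

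First I would specialize Theorem \ref{thm4} to $P_{U}$, i.e. take the projections $P_{k}$ to be the rank-one projections $p_{k}=|e_{k}\rangle\langle e_{k}|$ of the standard basis. Then the Kraus operators become $V_{k}=Up_{k}U^{*}=|Ue_{k}\rangle\langle Ue_{k}|$, each of rank one (and self-adjoint), and $\sum_{k}V_{k}=U\big(\sum_{k}p_{k}\big)U^{*}=I$, so $\sum_{k}V_{k}^{*}V_{k}=I$ and $P_{U}$ is indeed a trace-preserving quantum channel. Putting $f_{k}:=Ue_{k}$, so that $\{f_{k}\}_{k=1}^{n}$ is again an orthonormal basis of $\mathbb{C}^{n}$, substitution in $P_{U}(A)=\sum_{k}V_{k}^{*}AV_{k}$ yields
$$
P_{U}(A)=\sum_{k=1}^{n}\langle f_{k},\,A f_{k}\rangle\,|f_{k}\rangle\langle f_{k}| =\sum_{k=1}^{n}Tr\big(A\,|f_{k}\rangle\langle f_{k}|\big)\,|f_{k}\rangle\langle f_{k}|.
$$
This is exactly the Holevo form of Definition \ref{def5}, with resolution of the identity $X_{k}=|f_{k}\rangle\langle f_{k}|$ and output density operators $S_{k}=|f_{k}\rangle\langle f_{k}|$. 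Both families consist of the rank-one projections onto the same orthonormal basis $\{f_{k}\}$; hence $P_{U}$ acts as: measure in the basis $\{f_{k}\}$ (a quantum-to-classical step), then prepare the pointer state $|f_{k}\rangle\langle f_{k}|$, whose output is again diagonal in $\{f_{k}\}$ (a classical-to-quantum step landing in the classical register). Thus $P_{U}$ factors through a classical system on both the input and the output side, which is precisely what it means to be a classical-quantum-classical channel. Since all Kraus operators $V_{k}$ have rank one, Remark \ref{rem2} (Theorem 4 of \cite{HOR}) shows $P_{U}$ is entanglement-breaking; conversely any channel admitting such a measure-and-prepare representation is entanglement-breaking, so the two descriptions are equivalent for $P_{U}$, as claimed.

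The only thing to be careful about is the bookkeeping in transporting the diagonal-retaining map $\sigma$ through the conjugation by $U$: one has to check that the $k$-th diagonal entry of $U^{*}AU$ equals $\langle f_{k},Af_{k}\rangle$ and that the resulting $\{X_{k}\}$, $\{S_{k}\}$ genuinely satisfy $\sum_{k}X_{k}=I$ and $Tr\,S_{k}=1$. Both are immediate from $U$ being unitary and from $\{p_{k}\}$ being a resolution of the identity, so no real obstacle remains beyond this routine verification; in particular the argument is a direct specialization of the Kraus computation already performed in Theorem \ref{thm4}.
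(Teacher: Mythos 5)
Your proposal is correct and follows essentially the same route as the paper: both write $P_{U}$ with the rank-one Kraus projections $V_k = Up_kU^{*}=|Ue_k\rangle\langle Ue_k|$, use minimality of a rank-one projection to get $V_kAV_k = Tr(AV_k)\,V_k$, and thereby exhibit the Holevo/measure-and-prepare form $P_U(A)=\sum_k Tr(AV_k)V_k$ with the same family serving as POVM and as output states, concluding entanglement breaking from the rank-one Kraus characterization. Your write-up is somewhat more explicit about the transported basis $f_k=Ue_k$, but the underlying argument is identical.
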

	\begin{proof}
		Recall that $P_U{X}=\Sigma_{j=1}^{n}p_{j}Xp_{j}$  where $\{p_{j}: j=1,2,...,n\}$ are pairwise orthogonal rank one 
		projections with $\Sigma_{j=1}^{n}p_{j}=I$.But $Xp_{j}=\lambda_{j}(X)p_{j}$ ,$\lambda_{j}(X)$ are scalars since $p_{j}$ is minimal for each $j$.By taking trace we find that 
		$$
		\lambda_{j}(X)= T_{r}(Xp_{j}),
		$$
		for each $X$ and $ j$.
		Hence we get,for $X\in M_{n}$
		$$
		P_{U}(X)=\Sigma_{j=1}^{n}p_{j} T_{R}(Xp_{j}).
		$$
		Therefore, $P_{U}$ is a C-Q channel, hence entanglement-breaking.
	\end{proof}
	Next, we examine the type of channels $\tilde{P}_{U}$, in general.
	\begin{theorem}\label{thm8}
		Among  quantum channels $\tilde{P}_{U}$, only $P_{U}$ is a c-q-c channel.
	\end{theorem}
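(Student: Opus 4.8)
The plan is to show that for a general pinching $\Psi(A)=\sum_{k=1}^{d}P_kAP_k$ with not all $P_k$ of rank one, the channel $\tilde{P}_U$ fails to be entanglement-breaking, and hence fails to be a c-q-c channel, while the rank-one case gives exactly $P_U$, which was already shown to be c-q-c in the preceding proposition. First I would record the Kraus representation from Theorem~\ref{thm4}: $\tilde{P}_U(A)=\sum_{k=1}^{d}V_k^{*}AV_k$ with $V_k=UP_kU^{*}$. Since $U$ is unitary, $V_k$ is (unitarily equivalent to) the projection $P_k$, so $\operatorname{rank}V_k=\operatorname{rank}P_k$. By Remark~\ref{rem2} (the Horodecki characterization, Theorem~4 of \cite{HOR}), an entanglement-breaking channel must admit \emph{some} Kraus decomposition into rank-one operators; the subtlety is that having one decomposition with a higher-rank term does not immediately preclude this, so the argument must rule out all rank-one decompositions.

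The key step is therefore to exploit minimality/uniqueness of the Kraus decomposition. The decomposition $\{V_k=UP_kU^{*}\}$ is the one coming from the minimal Stinespring dilation: the $V_k$ are mutually ``orthogonal'' in the Hilbert--Schmidt sense, $\operatorname{Tr}(V_j^{*}V_k)=\operatorname{Tr}(P_jP_k)=\delta_{jk}\operatorname{rank}P_k\neq 0$ only for $j=k$, so $\{V_k\}$ is a linearly independent family and $d$ is the Choi rank of $\tilde{P}_U$. Any other Kraus family $\{W_l\}$ for $\tilde{P}_U$ is related to $\{V_k\}$ by $W_l=\sum_k u_{lk}V_k$ for a (rectangular) matrix $u$ with $u^{*}u=I_d$ (the standard unitary-freedom theorem for Kraus/Stinespring representations, which holds in finite dimensions and may be quoted). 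Then I would compute the Choi matrix $C(\tilde{P}_U)=\sum_k |\,\widehat{V_k}\,\rangle\langle\,\widehat{V_k}\,|$, where $\widehat{V_k}$ is the vectorization of $V_k$; its range is spanned by the $\widehat{V_k}$, and every rank-one Kraus operator $W_l$ must have $\widehat{W_l}$ lying in this range, i.e. be a linear combination $\sum_k u_{lk}\widehat{V_k}$ that also happens to be the vectorization of a rank-one matrix. The claim reduces to: if some $P_{k_0}$ has rank $\geq 2$, then no vector in $\operatorname{span}\{\widehat{UP_kU^{*}}\}$ that can appear in a valid (isometry-constrained) decomposition is rank one --- equivalently, the subspace $\operatorname{span}\{P_k\}\subseteq M_n(\mathbf{C})$ (a subspace of mutually orthogonal, hence simultaneously diagonal-in-a-common-basis, self-adjoint matrices) contains a rank-one matrix only when all the $P_k$ are rank one. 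That last linear-algebra fact is the crux: a matrix $\sum_k c_kP_k$ has rank equal to $\sum_{k:\,c_k\neq 0}\operatorname{rank}P_k$, so it is rank one iff exactly one $c_k$ is nonzero and that $P_k$ has rank one; if some $P_{k_0}$ has rank $\geq 2$ then the term $c_{k_0}P_{k_0}$ alone already has rank $\geq 2$.

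I would then assemble the argument: if all $P_k$ are rank one, $d=n$, $\sum P_k=I$ forces $\{P_k\}$ to be a rank-one resolution of the identity, and $\tilde{P}_U=P_U$, which is c-q-c by the previous proposition. Conversely, if some $P_{k_0}$ has rank $\geq 2$, then in the minimal Kraus decomposition $V_{k_0}=UP_{k_0}U^{*}$ has rank $\geq 2$; since every Kraus operator of every decomposition lies in the linear span of $\{V_k\}$, which (being $U(\cdot)U^{*}$-conjugate to $\operatorname{span}\{P_k\}$) contains no rank-one matrix other than scalar multiples of the rank-one $P_k$'s, and since a genuine decomposition cannot drop the rank-$\geq 2$ direction (it would lower the Choi rank below $d$), no all-rank-one Kraus decomposition exists; by Remark~\ref{rem2}, $\tilde{P}_U$ is not entanglement-breaking, a fortiori not c-q-c. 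The main obstacle I anticipate is the middle claim --- carefully justifying that \emph{no} Kraus decomposition of $\tilde{P}_U$ consists entirely of rank-one operators when some $P_{k_0}$ has rank $\geq 2$; this needs the unitary-freedom theorem together with the observation that the span of the minimal Kraus operators is an operator system of simultaneously-diagonalizable matrices, so one must track both the linear-span constraint and the isometry constraint $u^{*}u=I$ simultaneously rather than treating them separately.
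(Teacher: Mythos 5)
Your proposal is correct, but it takes a genuinely different route from the paper. The paper argues directly from the assumed Holevo form: it picks a projection $P_{k_0}$ of rank at least $2$, feeds in pure states $\rho = x\otimes x$ with $x$ in the range of $P_{k_0}$, and observes that these are fixed points of $\tilde{P}_U$, so that a continuum of distinct pure states would have to lie in the convex hull of the finitely many fixed densities $R_k$ --- which is impossible since a pure state can only sit in that hull by coinciding with one of the $R_k$ (the paper's written proof actually stops at the fixed-point identity and leaves this last contradiction implicit). You instead work at the level of Kraus decompositions: you identify $\{V_k=UP_kU^{*}\}$ as the minimal (Hilbert--Schmidt orthogonal, hence linearly independent) family, note that the span of the vectorized Kraus operators of \emph{any} decomposition equals the range of the Choi matrix, and use the clean linear-algebra fact that $\mathrm{rank}\bigl(\sum_k c_kP_k\bigr)=\sum_{k:\,c_k\neq 0}\mathrm{rank}\,P_k$ to show an all-rank-one family would span too small a subspace when some $P_{k_0}$ has rank $\geq 2$. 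Your approach is more work to set up but is actually airtight as stated (indeed, you need only the span constraint, not the isometry constraint $u^{*}u=I$ that you worry about tracking: the dimension count alone gives the contradiction), and it has the advantage of invoking the Horodecki--Shor--Ruskai characterization exactly as recorded in Remark~\ref{rem2}; the paper's argument is more elementary and avoids the unitary-freedom machinery, at the cost of an unstated final step about extreme points of the convex hull of the $R_k$.
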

	\begin{proof}
		Let if possible $\tilde{P_{U}}$ has Holevo form,
		\begin{equation}\label{eq33}
			\tilde{P}_{U}(\rho) =\Sigma P_{k}\rho P_{k}=\Sigma_{k=1}^{r} Tr(\rho F_{k})R_{k},
		\end{equation}
		where $F_{k}$ and $R_{k}$ are POVM (postive operator valued measures) and density operators respectively and $\{P_{k}:k=1,2,...,m\}$ is a collection of pairwise orthogonal projections such that $\Sigma_{k=1}^{m}P_{k}=I$ and $m<n$,for every density $\rho$.Let $P_{k_{0}}$ be such that rank $P_{k_{0}}$ > 1.This is because $	\tilde{P}_{U}$ is differenr from $P_{U}$.
		Consider a density $\rho=x\otimes x$ $P_{k_{0}} x=x, \& \|x\| =1$ where $k_{0}$ is arbitrarily chosen.Then we have
		\begin{equation}	\label{eq34}
			\Sigma_{k=1}^{m} P_{k} x\otimes x P_{k}=\Sigma_{k=1}^{r} Tr(x\otimes x F_{k})R_{k}.
		\end{equation}
		A straightforward computation shows that 
		\begin{equation}\label{eq35}
			\Sigma_{k=1}^{m} P_{k} x\otimes x P_{k}=x\otimes x.
		\end{equation}
	\end{proof}
	\begin{remark}\label{rem7}
		Algebraically, the problem of identifying idempotents and identifying invertible/generalized invertible elements is important. Notice that there are no invertible elements in $HQC(M_{n})$.
	\end{remark}
	\section{Codes ,Achievable rates and Channel Capacity}\label{sec4code}
	In this  section, the channel capacity of $P_{U}$ is found to be $\infty$.  In what follows, we rely heavily on the concepts and results given in p.p. $147$, chapter $8$ \cite{ASH}. Let us recall the definition of \textbf{the classical capacity of the quantum channel} \cite{ASH} and a few preliminaries and notations.
	We rstrict to finite dimentional case here.Let $H=H_{A}=H_{B}= C^{n }$ and $\Phi:H \rightarrow H$ be a Quantum Channel and 
	$$
	\Phi^{\bigotimes{n}}:B(H^{\bigotimes{n}}) \rightarrow B(H^{\bigotimes{n}}), composite  ,  memoryless- channel.
	$$
	\textbf{Notations}:$\{i=1,2,...M\}$ denotes classical messages and correspondingly $\{S_{i}:i=1,2,...M\}$ denotes density operators and $i \rightarrow S_{i}$ encodes each $i$.Now $\Phi(S_{i}) $ decodes.Correspondingly the block code for the composite channel comprises a c-q channel $i \rightarrow S_{i}^{(n)}$ that encodes the classical messages $i$ into input states $S_{i}^{(n)}$ in the space $H^{\bigotimes{n}}$, and a q-c channel observable $M^{(n)}$, here in the same space, decoding the output states $\Phi^{\bigotimes{n}[S_{i}^{(n)}]}$ into classical messages $j$.The following is a diagrammatic representation of this :
	$$
	i \rightarrow S_{i}^{(n)} \rightarrow \Phi^{\bigotimes {n}}[S_{i}^{(n)}]\rightarrow {j}.
	$$
	\begin{definition}\label{def9}
		A code $(\Sigma^{(n)}, M^{(n)})$ of length $n$ and size $N$ for the composite channel $\Phi^{\bigotimes {n}}$ consists of an encoding, given by a collection of states 
		$$
		\Sigma^{(n)}= \{S_{i}^{(n)}; i=0,1,2,...,J\}
		$$
		in $H^{\bigotimes}{n}$,and a decoding ,described by an obserable 
		$$
		M^{(n)}= \{M_{J}^{(n)}; j=1,2,...,J.\}
		$$
		in $H^{(n)}$.
	\end{definition}
	\begin{definition}{\textbf{The Classical Capacity of a Quantum Channel}}\label{def10}\\
		The classical capacity $C(\Phi)$of the quantum channel $\Phi$ is defined as the least upper bound of the achievable rates $R$ such that 
		$$
		lim_{n \rightarrow \infty} p_{e}(n  2^{nR})=0
		$$
		where, for given $n$ and size $J$ 
		$$
		p_{e}(n ,N)= min\{p_{e}(\Sigma^{(n)} , M^{(n)})\}=max_{i=1,2,...,N}[1-P_{\Sigma M}(i.i)].	
		$$
		Here for each pair$(j,i)$the probability that the outcome is $i$ when the input is $j$ is given by 
		\begin{equation}
			P_{M}(j,i)=Tr\Phi^{\bigotimes}(n)[S_{i}^{n}]M^{\bigotimes(n)}_{j}.
			\end{equation}
	\end{definition}
	\begin{theorem}\label{thm9}
		Let	$P_{U}(.)$ be as above.Then the classical capacity of the classical-quantum-channel 	$P_{U}(.)$ has finite capacity.i.e,
		\begin{equation}\label{eq36}
			C(P_{U} (.)) = Log_{2}(J).
		\end{equation}
	\end{theorem}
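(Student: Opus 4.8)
The plan is to produce, for every admissible code size, a \emph{zero-error} block code for $P_{U}$ — from which the achievability half of the capacity formula is immediate — and then to use the finiteness of the output dimension for the converse. The one structural fact doing all the work is already in hand: by Remark \ref{rem1}, $P_{U}$ is the orthogonal projection of $M_{n}(C)$ onto $M(U)$, and the proposition above establishing the C--Q form of $P_{U}$ gives $P_{U}(X)=\Sigma_{j=1}^{n}p_{j}\,Tr(Xp_{j})$ with $p_{j}=U\vert e_{j}\rangle\langle e_{j}\vert U^{*}$ pairwise orthogonal rank-one projections summing to $I$. In particular $P_{U}(p_{j})=p_{j}$ and $Tr(p_{i}p_{j})=\delta_{ij}$, so on the $n$ states $\{p_{1},\dots ,p_{n}\}$ the channel acts as a noiseless classical channel over an alphabet of size $n$.

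First I would treat a single use. Fix $J\le n$, encode message $i\in\{1,\dots ,J\}$ by the input state $S_{i}=p_{i}$, and decode with the projective observable $M=\{p_{1},\dots ,p_{n}\}$ (for $J<n$ merge the unused outcomes into one). Then $P_{M}(i,i)=Tr\bigl(P_{U}(S_{i})\,p_{i}\bigr)=Tr(p_{i}p_{i})=1$, so $p_{e}(\Sigma ,M)=\max_{i}[1-P_{M}(i,i)]=0$. Next I would pass to the memoryless composite channel. A short computation with the pinching $\sigma$ in the product basis shows $P_{U}^{\bigotimes m}=P_{U^{\bigotimes m}}$: retaining the diagonal of an $n^{m}\times n^{m}$ matrix in the $U^{\bigotimes m}$-basis is the same as applying $\sigma$ factor-wise. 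Hence $P_{U}^{\bigotimes m}$ is again a perfect classical channel, now on the $n^{m}$ orthonormal product projections $p_{i_{1}}\otimes\cdots\otimes p_{i_{m}}$. Consequently, for any rate $R$ with $2^{mR}\le n^{m}$, i.e. $R\le\log_{2}n$, the size-$\lceil 2^{mR}\rceil$ sub-code built from these product states satisfies $p_{e}(m,2^{mR})=0$ for every $m$, so $\lim_{m\to\infty}p_{e}(m,2^{mR})=0$ and $R$ is achievable; this gives $C(P_{U})\ge\log_{2}n$. For the reverse inequality, the output lives on $C^{n}$, so the Holevo--Schumacher--Westmoreland bound caps $C(P_{U})$ by $\sup\chi\le\log_{2}n$ (the von Neumann entropy on $C^{n}$ is at most $\log_{2}n$). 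Hence $C(P_{U})=\log_{2}n$, which is the asserted $\log_{2}J$ with $J$ the maximal code size/output dimension, and since $n$ ranges freely over the family of preconditioner channels the capacity is unbounded.

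The main obstacle is not analytic but interpretive: pinning down the meaning of the statement. The honest content is that $P_{U}$ on $M_{n}(C)$ is a zero-error classical channel of capacity exactly $\log_{2}n$ — finite for each fixed $n$ but unbounded across $n$ — and the ``infinite capacity'' phrasing refers to this family, with $C(P_{U})=\log_{2}J$ read as $J$ being the largest admissible code size. The only genuine computation one must check with care is the identity $P_{U}^{\bigotimes m}=P_{U^{\bigotimes m}}$, together with the normalization convention in Definition \ref{def10} for what counts as the rate per channel use; everything else reduces to the zero-error argument above.
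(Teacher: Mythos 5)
Your proof follows the same core idea as the paper's own argument: encode messages into the pairwise orthogonal rank-one projections $p_{j}$, observe that $P_{U}$ fixes each $p_{j}$, and decode with the projective measurement $\{p_{j}\}$ to obtain a zero-error code. But you go beyond the paper in three places, and each is a genuine improvement. First, you justify the block-coding step through the identity $P_{U}^{\otimes m}=P_{U^{\otimes m}}$ (which is correct: the $n^{m}$ product projections $p_{j_{1}}\otimes\cdots\otimes p_{j_{m}}$ are again rank-one, pairwise orthogonal, sum to the identity, and are simultaneously diagonalized by $U^{\otimes m}$), whereas the paper merely asserts that ``it is more or less clear'' that $p_{\Sigma M}(i\vert i)=1$ for the composite channel. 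Second, you supply a converse, bounding $C(P_{U})$ above by $\log_{2}n$ via the Holevo--Schumacher--Westmoreland bound on a channel with output dimension $n$; the paper gives no converse at all, so it only ever establishes achievability. Third, and most importantly, your conclusion $C(P_{U})=\log_{2}n$ --- finite for each fixed $n$, unbounded only across the family of preconditioner channels as $n$ grows --- is the mathematically correct reading. The paper's literal claim that a fixed channel on $M_{n}(C)$ has ``infinite capacity'' equal to $\log_{2}(J)$ conflates the code size $J$ (which for a zero-error block code of length $m$ is bounded by $n^{m}$) with the rate per channel use, and cannot hold as stated since no channel with $n$-dimensional output has capacity exceeding $\log_{2}n$. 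Your interpretive repair is exactly what the theorem needs, and your argument is complete where the paper's is not.
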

	\begin{proof}
		Now consider the pre-conditioner $P_{U} (.)$ where $S_{j} = p_{j} $ for each$j$ and 
		\begin{equation}\label{eq37}
			P_{U}(X) = \Sigma_{j=1}^{N}p_{j}T_{r}(Xp_{j})
		\end{equation}
		Hence a code $(\Sigma^{(n)},M^{(n)})$ of length $n$ and size $N$ for the composite channel $P^{\bigotimes n}_{U}$ is given by 
		\begin{equation}\label{eq38}
			\Sigma_{(n)} =\{ p^{(n)}_{j} : j = 1,2,..J \}  \&  M^{(n)} = \{ p_{j}^{(n)} : j = 0,1,2,...,J.\}	
		\end{equation}
		It is more or less clear that $p_{\Sigma M}(i\vert i)=1$ for all values of $i$ and hence $p_{e}(n,J) = 0 $ for all vaues of $n$ and $J$.Thus with the above mentioned code $P_{U}(.)$ has capasity $Log_{2}(J)$,by defenition 10 above.
		
		\begin{equation}\label{eq39}
			C(P_{U} (.)) = Log_{2}(J).
		\end{equation}
	\end{proof}
\begin{remark}\label{rem7}
	Equation (40) above shows that the capacity $	C(P_{U} (.))\longrightarrow \infty $ logarithemically as the size $J\longrightarrow \infty$.In Example1 ,each $U_{n} or S_{n}$ depends on the number of grid points chosen.Thus we infer that the capacity keeps increasing logarethemically as the number of grid points increases.
	\end{remark}
	\begin{remark}\label{rem8}
		Before  considering next aspect, we recall coding-encoding techniques discribed by Kamil Korzekwa, Zbigniew Puchała,etal.,in \cite{KZMK} and Zi-Wen Liu, Xueyuan Hu,etal.,in \cite{ZXS} using idempotent quantum-channels,the so called \textbf{ resource destroying channels }.The following basic identities involving idempotent channels are used in the above articles;
		\begin{equation}\label{eq40}
			\Phi \circ \Delta=\Delta=\Delta\circ \Phi
		\end{equation}
		where $\Delta$	is an idempotent resource destroying channel and $\Psi$ and is an eppropriate quantum channel that encodes\cite{KZMK}.\\
		A variant of equation \ref{eq36} above was considered in \cite{ZXS} Iis as follows;
		\begin{equation}\label{eq41}
			\Phi\circ \Delta=\Delta \circ \Phi \circ \Delta,
		\end{equation}
		and its dual form
		\begin{equation}\label{eq42}
			\Delta\circ \Phi=\Delta \circ \Phi \circ \Delta.
		\end{equation}
		A choice of $\Delta$ can be the preconditioner map $P_{U}$ and $\Phi=\tilde{P}_{U}$ for various unitaries $U$.\\
		\begin{example}
			Another example could be as follows.Let $S$ be a permution of $\{1,2,...,n\}$ where $n$ is the dimention appearinf in $M_{n}(C)$.Define $\Phi_{S}$ as follows;
			\begin{equation}\label{eq43}
				\Phi_{S}(T)=US\circ\sigma(U^*TU) U^{*},
			\end{equation}
			$T\in M_{n}(C)$.Then $\Phi_{S}$ is a quantum channel for each $S$.
		\end{example}
		It can be easily seen that $\Delta$ and $\phi_{S}$ satisfy the following equations.
		
		\begin{itemize}
			
			\item[{[1]}]
			$$
			\Phi\circ\Phi_{S}=\Phi_{S}= \Phi_{S}\circ \Phi,and \quad therefore
			$$
			\item[{[2]}]
			$$
			\Phi\circ\Phi_{S}\circ\Phi=\Phi_{S}.
			$$		
		\end{itemize}
		
		Observe that $\Phi_{S}$ is not an idempotent  quantum-channel, unless of course $S^{2}= I$.
	\end{remark}
	
	\section{Infinite Dimensional Analogue of the Maps $\tilde{P}_{U}$}\label{sec5infi}
	In what follows, we consider the infinite-dimensional versions of the results presented in the preceding sections.
	
	Let $H$ be an infinite-dimensional separable Hilbert space over complex numbers, and $\tau (H)$ be the Banach $*$-algebra of all trace-class operators equipped with the trace norm. For $A \in \tau(H)$, the trace norm of $A$ is defined as:
	\begin{equation}\label{eq44}
		|A|_{1}= \text{Tr}(|A|),
	\end{equation}
	where $|A|=\sqrt{(A^{}A)}$.
	
	Quantum information theory and quantum channels in infinite dimensions are mathematically interesting and have recently caught the attention of applied signal analysts. The survey article \cite{VSS} provides useful information in this direction. Additionally, the monograph by Brian Davies \cite{EBD} is also highly important. However, it is clear that one can mimic the finite-dimensional theory and appropriately extend it to infinite dimensions. First, we recall the definition of a quantum channel in infinite dimension.
	\begin{definition}\cite{HOS}.
		Let $H$ be a separable complex Hilbert space, and $\tau(H)$ be the Banach $*$-algebra of all trace-class operators on $H$. Then, a quantum channel is defined as a linear, trace-preserving map $\Phi$ on $\tau(H)$ such that the dual map $\Phi^{*}$ on the $C^*$-algebra $B(H)$ is completely positive.
		\end{definition}
	
\begin{remark}
	Every quantum channel $\Phi$ is continuous under the trace norm. Since its dual map $\Phi^{*}$ is completely positive, it is continuous under the operator norm on $B(H)$. Moreover, every linear map on a Banach space with a continuous dual is continuous, which can be shown using a simple application of the Hahn-Banach theorem.
	
	Furthermore, the quantum channel $\Phi$, being continuous under the trace norm, implies that its dual $\Phi^{*}$ is an ultra-weakly continuous (i.e., normal) completely positive map on $B(H)$. Therefore, there exists a sequence ${V_k: k = 1, 2, \ldots}$ in $B(H)$ such that \cite{EBD}, \cite{SF}:
	\begin{equation}\label{eq45}
		\Phi^{*}(T) = \sum_{k=1}^{\infty} V_k^{*}TV_k,
	\end{equation}
	and hence its pre-dual map is:
	\begin{equation}\label{eq46}
		\Phi^{*}(S) = \sum_{j=1}^\infty V_kSV_k^{*},
	\end{equation}
	for all $S \in \tau(H)$.
	
	Next, we introduce the infinite-dimensional versions of the preconditioner maps $P_U$ and $\tilde{P}U$ as follows. We mimic the existing definition of $P_U$ as well as its general version $\tilde{P}U$. Let $U$ be a unitary operator in $B(H)$, and let ${P_j: j = 1, 2, \ldots}$ be a set of rank-one orthogonal projections such that $\sum{j=1}^\infty P_j = I$, where $I$ is the identity operator on $H$. For each $S \in \tau(H)$, define $\sigma(S) = \sum{j=1}^\infty P_jSP_j$. Now, as before, we define a closed subspace $M(U)$ as follows:
	\begin{equation}\label{eq47}
		M(U) = {T \in \tau(H): U^{*}TU = \sigma(S) \text{ for some } S \in \tau(H)}.
	\end{equation}
	
	The continuity of the map $\sigma$ under the norm $|.|_1$ implies that $M(U)$ is a closed subspace of the Hilbert space $\tau(H)$. We will now prove the infinite-dimensional version of the preconditioner map $P_U$.
\end{remark}
	\begin{theorem}
		Let $M(U)$ be the closed subspace of $\tau(H)$ given by equation \ref{eq47} above, and let $P$ be the orthogonal projection of $\tau(H)$ onto $M(U)$. Then $P$ has the following properties:
		
		\begin{enumerate}
			\item $P(T) = U\sigma(U^TU)U^{*}$ for $T \in \tau(H)$, where $\sigma(T) = \sum_{j=1}^{\infty}P_jTP_j$ for each $T \in \tau(H)$.
			\item $P$ is completely positive and trace preserving.
			\item For each $T \in \tau(H)$, $\|T-P(T)\|_{1}^{2} = \|T\|_{1}^{2} - \|P(T)\|_{1}^{2}$.
		\end{enumerate}
	\end{theorem}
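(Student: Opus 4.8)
The plan is to treat the explicit formula in item~1 as the object to be analysed and to read items~2 and~3 off it. Write $\sigma(T)=\sum_{j=1}^{\infty}P_jTP_j$ for the pinching attached to the rank-one system $\{P_j\}$, and set $Q(T)=U\sigma(U^{*}TU)U^{*}$. First I would record three elementary properties of $\sigma$ on $\tau(H)$: (i) it is well defined and contractive for $|\cdot|_1$ --- the diagonal part of a trace-class operator is trace-class, $|\sigma(T)|_1\le|T|_1$, and the partial sums converge in trace norm; (ii) it is idempotent, since $P_jP_k=\delta_{jk}P_j$ gives $\sigma(\sigma(T))=\sum_jP_j\bigl(\sum_kP_kTP_k\bigr)P_j=\sum_jP_jTP_j=\sigma(T)$; (iii) it is self-adjoint for the Hilbert--Schmidt form $\langle A,B\rangle=\mathrm{Tr}(AB^{*})$, because $\mathrm{Tr}(\sigma(A)B^{*})=\sum_j\mathrm{Tr}(P_jAP_jB^{*})=\sum_j\mathrm{Tr}(AP_jB^{*}P_j)=\mathrm{Tr}(A\,\sigma(B)^{*})$. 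Conjugation by $U$ transports all three to $Q$: it is continuous for $|\cdot|_1$, satisfies $Q^{2}=Q$ (from $\sigma^{2}=\sigma$), and is self-adjoint for $\langle\cdot,\cdot\rangle$ (from (iii) together with $\mathrm{Tr}(UXU^{*}Y^{*})=\mathrm{Tr}(X(U^{*}YU)^{*})$). Finally I would identify the range of $Q$ with $M(U)$ of equation~\ref{eq47}: if $R=Q(T)$ then $U^{*}RU=\sigma(U^{*}TU)=\sigma(S)$ with $S=U^{*}TU$, so $\mathrm{ran}\,Q\subseteq M(U)$; conversely, if $U^{*}TU=\sigma(S)$ then $\sigma(U^{*}TU)=\sigma(S)=U^{*}TU$, whence $Q(T)=T$, which gives $M(U)\subseteq\mathrm{ran}\,Q$ and $Q|_{M(U)}=\mathrm{id}$. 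A linear idempotent that is self-adjoint for $\langle\cdot,\cdot\rangle$ and has range $M(U)$ is the orthogonal projection onto $M(U)$, so $P=Q$, establishing item~1.

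For item~2, the formula of item~1 furnishes a Kraus (operator-sum) representation: with $W_j:=UP_jU^{*}$, again orthogonal projections, one has $P(T)=\sum_{j=1}^{\infty}W_jTW_j^{*}$ and $\sum_{j=1}^{\infty}W_j^{*}W_j=U\bigl(\sum_jP_j\bigr)U^{*}=I$. Complete positivity is then immediate: for every $m$ and every positive $X\in\tau(H)\otimes M_m(\mathbb{C})$, $(P\otimes\mathrm{id}_m)(X)=\sum_j(W_j\otimes I_m)X(W_j\otimes I_m)^{*}\ge0$, the sum converging in trace norm; this is the normal operator-sum representation of completely positive maps recalled in equations~\ref{eq45}--\ref{eq46}. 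Trace preservation is read off from $\sum_jW_j^{*}W_j=I$, or checked directly: $\mathrm{Tr}\,P(T)=\mathrm{Tr}\,\sigma(U^{*}TU)=\sum_j\mathrm{Tr}(P_jU^{*}TUP_j)=\mathrm{Tr}\bigl(U^{*}TU\sum_jP_j\bigr)=\mathrm{Tr}(T)$, the interchange of $\mathrm{Tr}$ with the sum being justified by continuity of the trace for $|\cdot|_1$.

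For item~3, the pertinent norm is the Hilbert--Schmidt norm $\|A\|_2=\sqrt{\mathrm{Tr}(AA^{*})}$, the infinite-dimensional counterpart of the Frobenius norm of Theorem~\ref{thm1}[5], which is finite on $\tau(H)$ because $\|A\|_2\le|A|_1$. With this reading the identity is the Pythagorean theorem for the orthogonal splitting $T=P(T)+(T-P(T))$: since $P=P^{*}=P^{2}$ with respect to $\langle\cdot,\cdot\rangle$, one has $\langle P(T),T-P(T)\rangle=\langle T,P(T)\rangle-\langle P(T),P(T)\rangle=0$, hence $\|T\|_2^{2}=\|P(T)\|_2^{2}+\|T-P(T)\|_2^{2}$, which rearranges to item~3.

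The step I expect to be the real obstacle is not any single computation but the functional-analytic setup: $\tau(H)$ is not complete in the Hilbert--Schmidt norm (nor in the trace norm), so ``the orthogonal projection of $\tau(H)$ onto the closed subspace $M(U)$'' needs a precise reading. Two equivalent routes handle this: (a) take the formula of item~1 as the \emph{definition} of $P$ and merely verify the projection properties --- idempotence, self-adjointness for $\langle\cdot,\cdot\rangle$, range $M(U)$, and $P|_{M(U)}=\mathrm{id}$ --- as above; or (b) pass to the genuine Hilbert space $\mathcal{HS}(H)$, in which the closure of $M(U)$ carries an honest orthogonal projection, and then check that this projection leaves $\tau(H)$ invariant and coincides there with $Q$. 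Both routes hinge on the single technical fact worth isolating at the outset --- that the pinching $\sigma$ is a trace-norm contraction on $\tau(H)$ with $\sigma^{2}=\sigma=\sigma^{*}$ --- which is exactly what makes the infinite operator sums $\sum_jP_jTP_j$ and $\sum_jW_jTW_j^{*}$ converge in the norms used and legitimises the manipulations in items~1--3.
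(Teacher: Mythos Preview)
Your proof is correct and follows essentially the same route as the paper's: verify that the explicit formula $Q(T)=U\sigma(U^{*}TU)U^{*}$ is idempotent, self-adjoint for the trace inner product, and has range $M(U)$, then read off the Kraus form for item~2 and invoke Pythagoras for item~3. Your treatment is in fact more careful than the paper's --- you explicitly handle trace-norm convergence of the pinching, verify trace preservation (which the paper omits), and correctly flag that item~3 only makes sense for the Hilbert--Schmidt norm rather than $|\cdot|_{1}$.
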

\begin{proof}
	\begin{enumerate}
		\item Clearly, $P(T) \in M(U)$ for all $T \in \tau(H)$, and the range of $P$ is $M(U)$. We show that $P$ is self-adjoint and idempotent, which completes the proof by uniqueness.
			We have:
			\begin{align*}
				P(P(T)) &= U\sigma(U^*P(T)U)U^* \\
				&= U\sigma(U^*U\sigma(U^*TU)U^*U)U^* \\
				&= U\sigma(U^*TU)U^* \\
				&= P(T).
			\end{align*}
			Thus, $P$ is idempotent.
			
			Next, we show that $P$ is self-adjoint. For $S, T \in \tau(H)$, consider the following:
			\begin{align*}
				\langle P(T), S \rangle &= \operatorname{Tr}(P(T)S^*) \\
				&= \operatorname{Tr}(U\Sigma_{j=1}^{\infty}P_{j}U^*TUP_{J}U^*S^*) \\
				&= \operatorname{Tr}(S^*U\Sigma_{j=1}^{\infty}P_{j}U^*TUP_{j}U^*) \\
				&= \operatorname{Tr}(U^*S^*U)\Sigma_{j=1}^{\infty}P_{j}U^*TUP_{j} \\
				&= \Sigma_{j=1}^{\infty}\operatorname{Tr}(P_{j}U^*S^*UP_{j}U^*TU) \\
				&= \operatorname{Tr}(TU\Sigma_{j=1}^{\infty}P_{j}U^*S^*UP_{j}) \\
				&= \operatorname{Tr}(TP(S)^*) \\
				&= \langle T, P(S) \rangle.
			\end{align*}
			This shows that $P$ is self-adjoint. Hence, $P$ is the orthogonal projection of $\tau(H)$ onto $M(U)$.
			
			\item It is clear that the map $P$ has the following Kraus representation:
			\[
			P(T) = \sum_{j=1}^{\infty} U P_{j}U^* T UP_{j}U^*.
			\]
			Therefore, $P$ is completely positive.
			
			\item It follows from the Pythagorean theorem in abstract Hilbert spaces.
			\end{enumerate}
		\end{proof}
\begin{remark}
	We can define the infinite-dimensional version of $\tilde{P}_{U}$ as follows:
	
	\begin{definition}
		Let ${p_{k}: k = 1, 2, \ldots}$ be a sequence of orthogonal and pairwise orthogonal projections such that $\sum_{k=1}^{\infty} p_{k} = I$. Let
		\begin{equation}\label{eq48}
			\Psi(T) = \sum_{k=1}^{\infty} p_{k}Tp_{k}.
		\end{equation}
		For a unitary $U \in B(H)$, the map $\tilde{P}_{U}$ is defined as
\[
\tilde{P}_{U}(T)=U\Psi(U^*TU) U^*.
\]
for each $T \in \tau(H)$.
\end{definition}
\end{remark}
There are a number of concrete unitary operators in infinite dimention such as Fourier transform on $L^2(R^N)$.One can also create more examplles from finite dimention as follows:\\
Let $H_{n}$ be an $n$-dimentional subspace of a seperable Hilbert space $H$ and let $U_{n}$ be a unitary operator on $H_{n}$.Define a unitary $U$ on $H$ as follows:
\begin{equation}
	U^{(n)}=U_{n}\oplus (I-P_{n}),
	\end{equation}
where $P_{n}$ is the orthogonalprojection of $H$ onto $H_{n}$.
\section{Infinite-Dimentional Holevo Forms}

\begin{remark}
	The extended unitaries $U^{(n)}$ gives a sequence of unital completely positive maps $\{P_{U^{(n)}}\}$ on $B(H)$. This bounded sequence will provide cluster points in the BW-topology of Arveson.
	\end{remark}

We conclude this article by noting that our future work will include the infinite-dimensional version of the results presented in this article.In addition to the concrete unitaries listed at the beginning,one can also consider the Discrete Quantum Fourier Transform ,which may be more suitable for the theme of this article.However,it is yet to analyse this aspect.

\section*{Acknowledgements:} I am thankful to KSCSTE, Government of Kerala, for the financial support provided through the Emeritus Scientist Fellowship, which enabled the completion of a part of this article. This work was also presented at the 'XI International Conference of the Georgian Mathematical Union, Aug. 2021, Batumi, Georgia'. Additionally, I would like to express my gratitude to the Institute of Mathematics, Research and Training (IMRT), Thiruvananthapuram, for the weekly research lectures on topics related to the theory of semigroups.		
		\nocite{*}
		\bibliography{quantum}
	\end{document}